\newtheorem{problem}{Problem}[section]
\newtheorem{assumption}{Assumption}[section]
  \providecommand\BibTeX{{%
    \normalfont B\kern-0.5em{\scshape i\kern-0.25em b}\kern-0.8em\TeX}}}
\begin{document}

\title{A Timing-Based Framework for Designing Resilient Cyber-Physical Systems under Safety Constraint}

\author{Abdullah Al Maruf}
\email{maruf3e@uw.edu}
\affiliation{%
\institution{Network Security Lab, Department of Electrical and Computer Engineering, University of Washington}
 \streetaddress{P.O. Box 1212}
\city{Seattle}
\state{Washington}
\country{USA}
\postcode{98195-2500}
}
\orcid{1234-5678-9012}
\author{Luyao Niu}
\email{luyaoniu@uw.edu}
\affiliation{%
\institution{Network Security Lab, Department of Electrical and Computer Engineering, University of Washington}
\city{Seattle}
\state{Washington}
\country{USA}
\postcode{98195-2500}
}

\author{Andrew Clark}
\email{ andrewclark@wustl.edu}
\affiliation{%
 \institution{Electrical and Systems Engineering Department, McKelvey School of Engineering, Washington University in St. Louis}
\state{Missouri}
  \country{USA}
  \postcode{63130}}

\author{J. Sukarno Mertoguno}
\email{karno@gatech.edu}
\affiliation{%
 \institution{School of Cybersecurity and Privacy, Georgia Institute of Technology}
\city{Atlanta}
\state{Georgia}
\country{USA}
\postcode{30332}}

\author{Radha Poovendran}
\email{rp3@uw.edu}
\affiliation{%
\institution{Network Security Lab, Department of Electrical and Computer Engineering, University of Washington}
\city{Seattle}
\state{Washington}
\country{USA}
\postcode{98195-2500}
}

\renewcommand{\shortauthors}{Maruf et al.}

\begin{abstract}
Cyber-physical systems (CPS) are required to satisfy safety constraints in various application domains such as robotics, industrial manufacturing systems, and power systems. Faults and cyber attacks have been shown to cause safety violations, which can damage the system and endanger human lives. 
Resilient architectures have been proposed to ensure safety of CPS under such faults and attacks via methodologies including redundancy and restarting from safe operating conditions.
The existing resilient architectures for CPS utilize different mechanisms to guarantee safety, and currently there is no approach to compare them. Moreover, the analysis and design undertaken for CPS employing one architecture is not readily extendable to another. In this paper, we propose a timing-based framework for CPS employing various resilient architectures and develop a common methodology for safety analysis and computation of control policies and design parameters. Using the insight that the cyber subsystem operates in one out of a finite number of statuses, we first develop a hybrid system model that captures CPS adopting any of these architectures. Based on the hybrid system, we formulate the problem of joint computation of control policies and associated timing parameters for CPS to satisfy a given safety constraint and derive sufficient conditions for the solution. Utilizing the derived conditions, we provide an algorithm to compute control policies and timing parameters relevant to the employed architecture. We also note that our solution can be applied to a wide class of CPS with polynomial dynamics and also allows incorporation of new architectures. We verify our proposed framework by performing a case study on adaptive cruise control of vehicles.


\end{abstract}

\begin{CCSXML}
<ccs2012>
 <concept>
  <concept_id>10010520.10010553.10010562</concept_id>
  <concept_desc>Computer systems organization~Embedded systems</concept_desc>
  <concept_significance>500</concept_significance>
 </concept>
 <concept>
  <concept_id>10010520.10010575.10010755</concept_id>
  <concept_desc>Computer systems organization~Redundancy</concept_desc>
  <concept_significance>300</concept_significance>
 </concept>
 <concept>
  <concept_id>10010520.10010553.10010554</concept_id>
  <concept_desc>Computer systems organization~Robotics</concept_desc>
  <concept_significance>100</concept_significance>
 </concept>
 <concept>
  <concept_id>10003033.10003083.10003095</concept_id>
  <concept_desc>Networks~Network reliability</concept_desc>
  <concept_significance>100</concept_significance>
 </concept>
</ccs2012>
\end{CCSXML}


\keywords{cyber-physical systems (CPS), safety-critical system, cyber resilient architecture, barrier certificate}

\maketitle

\section{Introduction}\label{sec:intro}

The coupling between cyber and physical subsystems of Cyber-Physical Systems (CPS) creates an opportunity for faults and attacks on cyber components to impact the physical performance and safety \cite{alemzadeh2016targeted,koscher2010experimental}.
Cyber attacks causing severe safety violations have been reported in many application domains such as transportation \cite{Jeep}, medical devices \cite{halperin2008pacemakers}, and power system \cite{case2016analysis}.

A variety of fault tolerant architectures \cite{sha2001using,bak2009system,mohan2013s3a} have been proposed to mitigate the impact of failures in CPS. These architectures primarily use redundancies and require at least one of the redundant components to be verified as fault-free at any time instant. However, when CPS are under cyber attacks, these architectures become sub-optimal or even not applicable since an intelligent adversary can exploit the same vulnerabilities common to all the components and cause system failure.


To address cyber attacks against CPS, cyber resilient architectures (CRAs) \cite{mertoguno2019physics,arroyo2019yolo,abdi2017application,abdi2018guaranteed,romagnoli2020software,romagnoli2019design,niu2022verifying} have been proposed to ensure the safety of CPS during attacks and recover the system to its nominal operation after attacks.
One class of architectures mitigates cyber attacks by restarting the cyber subsystem to an uncompromised (`clean') state. This restart can either be proactive (periodic) or reactive by engineering the controller to crash when the adversary attempts to exploit the vulnerabilities. These restart-based mechanisms limit the adversary's capabilities of compromising the cyber component, but also render the controller inoperative during restart, potentially leading to safety violations.



The existing CRAs \cite{mertoguno2019physics,arroyo2019yolo,abdi2018guaranteed,romagnoli2019design,niu2022verifying,gamarra2019dual} employ different mechanisms that achieve different levels of resilience, safety, and system performance. At present, there is no common analysis methodology to evaluate them or compute the associated design parameters and control policies for CPS safety.
Such a methodology should be based on a framework to model CPS employing all possible CRAs, which currently does not exist. 

In this paper, we propose and develop a hybrid system framework to provide guarantee on safe operation of a CPS that uses different classes of CRAs.
Our insight is that the cyber subsystem operates in one out of a finite number of statuses, such as \emph{normal} (following nominal control policy), \emph{corrupted} (controller compromised by the adversary), and \emph{restoration} (recovering the controller). 
These statuses comprise the discrete location set of the hybrid system, whereas the physical dynamics of the CPS are captured by a nonlinear dynamical state space model.
Based on the hybrid system, we develop a common analysis methodology to compute control policies and associated design parameters for different classes of CRAs.
Our approach is based on computing the duration of time that the CPS can remain in each status to guarantee CPS safety. These time durations, which we denote as \emph{timing parameters}, provide a common method to quantify the resilience of CPS when employing distinct CRAs, and thus allow us to evaluate the CRAs. Our approach is sufficiently general to enable analysis and design for future CRAs. Our main contributions in this paper are summarized as follows.

\begin{itemize}
    \item We develop a hybrid system model to capture CPS employing at least ten state-of-the-art architectures of six different classes \cite{sha2001using,bak2009system,mohan2013s3a,mertoguno2019physics,arroyo2019yolo,arroyo2017fired, abdi2018guaranteed,romagnoli2019design,niu2022verifying,gamarra2019dual}. The discrete transitions of the hybrid system model how cyber statuses evolve over time following cyber attacks.
    \item Using the hybrid system, we quantify how fast CPS approaches the boundary of a given safety region at each cyber status. Using this quantification, we derive sufficient conditions for a control policy and timing parameters so that the system remains within the safety region.
   \item We present an algorithm to jointly compute the control policy and timing parameters by mapping the derived conditions to a sum-of-squares program that is applicable to any CPS with polynomial dynamics. We analyze our proposed algorithm and prove  that it converges to a safe control policy, provided that one exists.
   
   \item  We validate our algorithm by using a case study on adaptive cruise control of vehicles. We show that our proposed approach guarantees safety of the system implementing any of the CRAs.
    
\end{itemize}

A preliminary version of this work was presented in \cite{niu2022analytical}.
Compared with \cite{niu2022analytical}, this article differs in the following aspects. 
First, our hybrid model incorporates more CRAs than \cite{niu2022analytical}, including proactive restart \cite{abdi2018guaranteed,romagnoli2019design}, reactive restart \cite{niu2022verifying}, and dual redundant architectures \cite{gamarra2019dual}. 
Moreover, the conditions and algorithm derived in this paper are applicable to CPS with a broader class of physical dynamics.
Our proposed approach is also more flexible since it can be mapped to new architectures, allowing future resilient designs to be incorporated in the framework. We present a new case study for which the solution method given in \cite{niu2022analytical} is not applicable.

The remainder of this paper is organized as follows. Section \ref{sec:rel_work} presents the related works in the context of our paper. Section \ref{sec:model} presents the system and threat models. Our proposed analytic framework and problem statement are presented in Section \ref{sec:framework}. Section \ref{sec:sol} provides our solution approach including proposed algorithm and its analysis. A case study is presented in Section \ref{sec:sim}. We conclude the paper in Section \ref{sec:conclu}. 
\section{Related Works}\label{sec:rel_work}

There has been extensive study on verification \cite{prajna2007framework,pajic2014safety} and control synthesis \cite{ames2019control,cohen2020approximate,qin2021learning,herbert2017fastrack} for safety-critical CPS, assuming that the systems are operated under benign environments without any fault or attack.
Barrier certificate based approaches, where safety constraint is encoded as a linear inequality on the control input, have been widely used in this context \cite{ames2016control,xu2018constrained,prajna2007framework}. 

To address faults occurring in CPS, many fault tolerant controls \cite{zhang2008bibliographical,sharifi2010fault,xu2020distributed} and architectures \cite{castro2002practical,sha2001using,bak2009system,mohan2013s3a,niu2019lqg} have been proposed. Fault tolerant controls usually consist of fault detection scheme accompanied with resynthesis of controllers using methods like robust $H_{\infty}$ control and model predictive control \cite{zhang2008bibliographical}. Fault tolerant architectures are primarily redundancy based and designed to deal with known failures which occur randomly. 

One of the widely adopted fault tolerant designs is Simplex architecture \cite{sha2001using}. This architecture consists of a main controller and a safety controller. The main controller is a high performance controller which is vulnerable to random failure whereas safety controller is verifiable and fault-free. Both controllers run in parallel and a decision module monitors the system states and decides which controller to be used to actuate the physical subsystem. The main controller operates the system unless the decision module instantly switches to safety controller under certain conditions, e.g., main controller is faulty. After recovering the main controller from fault, the decision module switches back to the main controller again. System-level Simplex \cite{bak2009system} requires the safety controller and decision module to be located in a dedicated trusted processing unit such as FPGA. Secure System Simplex (S3A) \cite{mohan2013s3a} is an enhancement of system-level Simplex as the decision module also monitors the side channels of main controller for faster detection of certain cyber attacks that cannot replicate the monitored side channels. These fault tolerant architectures rely on the assumption that there is no common mode failure for all the controllers. However, this assumption may not hold in the presence of cyber attack \cite{mertoguno2019physics}.

In the existing literature, two different directions have been taken to address cyber attacks in CPS. In the first direction, control- and game-theoretic approaches have been proposed to protect CPS from cyber attacks \cite{pajic2014robustness,fawzi2014secure,mo2010false, manshaei2013game,zhu2015game,cardenas2008research}. The idea is to detect the attack and then prevent the impact from the attack by making necessary corrections for the detected attack. The second direction focuses on designing resilient systems that can tolerate and then recover from attack. The CRAs \cite{mertoguno2019physics,arroyo2019yolo,arroyo2017fired,abdi2018guaranteed,romagnoli2019design,niu2022verifying,gamarra2019dual} including BFT++ (Byzantine fault tolerant++) \cite{mertoguno2019physics} and YOLO (You Only Live Once) \cite{arroyo2019yolo} follow this approach.

BFT++ \cite{mertoguno2019physics} relies upon redundancies of the controller. In particular, BFT++ uses multiple redundant controllers where one of the them is designated as backup. 
Other non-backup controllers are employed with artificial software diversity via different implementations of software or randomization of the memory or instruction set \cite{larsen2014sok,kc2003countering}. Due to this diversity, one of the non-backup controller deliberately crashes when an attacker attempts to intrude the system by exploiting the cyber vulnerabilities. This crash signal triggers restoration of non-backup controllers from the backup one. 

YOLO and its variant use periodic restart and does not have any redundant controller \cite{arroyo2019yolo,arroyo2017fired}. During each restart, the controller is reset to its `clean' state by loading its software from a read only module and clearing out all the volatile memory. YOLO also implements software diversity after each restart to ensure that the attacker cannot exploit same vulnerabilities. YOLO requires proper tuning of controller availability based on the natural resilience of the physical subsystem. Dual redundant scheme adopts same strategy as YOLO, but instead of restarting CPS, it periodically switches between two identical controllers \cite{gamarra2019dual}. After switching to the standby controller, the other controller restarts to ensure its integrity. This scheme is extendable to multiple controllers where CPS switches among them periodically. However this redundant scheme is useful over YOLO only when the restart time of controllers is relatively high. 

Proactive restart based scheme also uses restart to recover the compromised controller \cite{abdi2018guaranteed,romagnoli2019design}. Unlike YOLO, a secured execution interval (SEI) program is executed following a restart. During this interval, all the external interfaces of the system are disabled while a safety controller takes over the system. After scheduling the next restart time, the main controller program takes over from the safety controller and CPS resumes its normal operation by enabling the external interfaces. In \cite{abdi2018guaranteed} online reachability analysis was used to determine the time for next restart and then it was scheduled in an external hardware timer which can only be programmed once before the next restart. Reference \cite{romagnoli2019design} provided a method to compute these timing parameters using offline reachability analysis over linearized dynamics of the physical subsystem. These results has been extended for networked systems as well as noisy settings \cite{griffioen2019secure,romagnoli2020robust}. 

Reactive restart based mechanism uses crash signals as triggers to restart \cite{niu2022verifying}. It assumed that the adversary needs certain amount of time (referred as exploitation window) to exploit the vulnerabilities to own the controller. The controller will then crash after some time (referred as vulnerability window) due to erroneous inputs coming from the adversary. Reference \cite{niu2022verifying} provides an analysis using barrier certificate based approach to compute the control policy and bounds on these timing windows. However, this analysis does not hold for a system whose barrier certificate is of higher relative degree with respect to the dynamics of the physical subsystem. 


\section{System and Threat Model} \label{sec:model}

In this section we first give some notations and then describe the system model. We then present the threat model.


\subsection{System Model}

Throughout this paper, we use $\mathbb{R}$, $\mathbb{R}_{\geq 0}$, $\mathbb{R}_{>0}$, $\mathbb{Z}$, $\mathbb{Z}_{\geq 0}$, and $\mathbb{Z}_{> 0}$  to denote the sets of real numbers, non-negative real numbers, positive real numbers, integers, non-negative integers, and positive integers, respectively. Given a vector $x\in\mathbb{R}^n$, we denote its $i$-th entry as $[x]_i$, where $i=1,\ldots,n$.

We consider a CPS comprised of a cyber subsystem and a physical subsystem. The physical subsystem is modeled by a plant that has the dynamics
\begin{equation}\label{eq:dynamics}
    \dot{x}_t = f(x_t)+g(x_t)u_t,
\end{equation}
where $x_t\in \mathcal{X} \subset \mathbb{R}^n$ is the system state and $u_t\in\mathcal{U}\subset\mathbb{R}^m$ is the control input. Functions $f:\mathbb{R}^n\rightarrow\mathbb{R}^n$ and $g:\mathbb{R}^{n}\rightarrow\mathbb{R}^{n\times m}$ are assumed to be Lipschitz continuous. We also assume that the input set satisfies $\mathcal{U}=\prod_{i=1}^m[u_i^{min},u_i^{max}]$ with $u_i^{min} \leq u_i^{max}$. A control policy $\mu:\mathcal{X}\rightarrow\mathcal{U}$ determines the actuator signal $u_t$ given the system state $x_t$, so that $u_t=\mu(x_t) \in \mathcal{U}$ for $x_t \in \mathcal{X}$.

Although the physical plant evolves in continuous time, the cyber subsystem interacts with the physical subsystem at discrete instants of time. That is, the cyber subsystem reads the measurements of the physical subsystem's state and issues control command at discrete time $k\delta$, where $k\in\mathbb{Z}_{\geq 0}$ and $\delta\in\mathbb{R}_{>0}$. We refer to each discrete time interval of duration $\delta$ as an \emph{epoch}. 
The duration $\delta$ of each epoch is dependent on the controller's sampling period. 



Safety-critical CPS are required to operate within a certain range called as \emph{safety set}. Here we assume that the safety set of our considered CPS is given as $\mathcal{C} = \{x \in \mathcal{X}:h(x)\geq 0\}$, where $h:\mathbb{R}^n\rightarrow\mathbb{R}$ is a $r$-th order continuously differentiable function with $r \in \mathbb{Z}_{>0}$. 

\subsection{Threat Model}
We consider that there exists an intelligent adversary in the CPS with the goal of driving the system into unsafe region of operations. The intelligent adversary can initiate cyber attacks by exploiting vulnerabilities in the cyber subsystem so as to intrude into the system. Once the adversary completes intrusion, it then gains access to software, actuator, and other peripherals. The adversary can then corrupt the actuator signal and arbitrarily manipulate the control input to be $\tilde{u}_{k\delta}\in\mathcal{U}$ for epoch $k\in\mathbb{Z}_{\geq 0}$. As a consequence, the system will deviate from the desired trajectories and may violate the safety constraint. Throughout this paper, we assume that the adversary cannot physically access the components in CPS, e.g., physically damage the sensors and plant. We also assume that the system is not susceptible
to external sensor spoofing or jamming attacks.


\section{Proposed Cyber Resilient Framework}\label{sec:framework}

In this section we first introduce the timing behaviors of the CRAs which will be incorporated in our framework. Then we propose a hybrid system model to capture CPS employing different CRAs. Finally we state the problem that we seek to solve in this paper.

\subsection{Timing Behaviors}

In this subsection, we discuss the timing behaviors of the CPS employing different CRAs. We first identify a set of statuses of the cyber subsystem. We then discuss how the system evolves with time by specifying its timing behavior, which will help us to develop a hybrid system model for CPS that implements any of these architectures.

Based on the behavior of the cyber subsystem, we identify the \textit{statuses} of the system as follows: $normal, corrupted, restart,$ $restoration, safety~controller~driven (SC)$, $switching$ and $unsafe$. When the nominal controller is being used and the system is in the safety set $\mathcal{C}$, the corresponding status is $normal$. In this status the control input $u$ follows the control policy $\mu(x)$. Status $corrupted$ models the scenario where an adversary successfully intrudes into the system and corrupts the control input to $\tilde{u}\neq \mu(x)$. Status $restart$ models the system when the controller is in the process of restart and thus the controller input will be zero during this status. When the CPS restores the compromised controller using the backup controller of BFT++ in order to recover from the attack, we model such status as $restoration$. 
At this status, the control input will not be accurate since the controller is not fully recovered from attack, and thus it will deviate from the desired control signal.
Status $SC$ models the system when the safety controller drives the system. Switching process in dual redundant scheme is modeled as $switching$. The controller input will also be zero during $switching$. When the system is in the unsafe region, we model that as $unsafe$. We explicitly include $unsafe$ status in our model to capture the safety-critical nature of CPS. We remark that unlike other status in the hybrid model, the status $unsafe$ is solely based upon the physical states of the system, i.e., whether $x_t \notin  \mathcal{C}$. 

Now we discuss the evolution of these statuses under attack. Suppose that an attacker intrudes the system at epoch index $j$ which causes the system status to change from $normal$ to $corrupted$. The timing behavior of the system following the intrusion depends upon the employed architecture. When the system adopts BFT++ \cite{mertoguno2019physics}, the CPS will spend $N_1$ epochs in the $corrupted$ status until one of the redundant controllers crashes due to diversity implemented in the controllers, e.g., diversified software implementation and randomized instruction set. This crash signal will initiate controller restoration. The system elapses $N_2$ epochs in $restoration$ status during which the compromised controller are restored by using the backup controller. After restoration, the system will return to the $normal$ status. In practice, for BFT++ implementation we observe that $N_1=2$ and $N_2=2$ in the worst-case scenario \cite{mertoguno2019physics}. Therefore if the physical subsystem can tolerate four epochs of disruption caused by a cyber attack, the system will remain safe. To withstand consecutive cyber attacks, the system needs to operate in the $normal$ status for at least $N_3$ epochs.

In the case of YOLO \cite{arroyo2019yolo,arroyo2017fired}, it is possible for an attacker to exploit the vulnerabilities without causing any software crash. We suppose that the attacker can corrupt the system for at most $N_4$ epochs which is the up-time of the controller. During this $corrupted$ status, the attacker manipulates the control signal issued by the controller. The controller will recover by the periodic restart triggered by a timer. The system will then elapse $N_5$ epochs in the $restart$ status in order to reboot and reinitialize the controller. Note that at the $restart$ status, the controller does not produce any control input and thus $u=0$. The value of $N_5$ depends on multiple factors including operating system and controller, but in general $N_5 > N_2$. If the intrusion causes a software crash, then the system restarts automatically without a trigger coming from the timer. The restart period for YOLO is given by $N_4+N_5$ epochs. The timing behavior of dual redundant scheme \cite{gamarra2019dual} is identical to YOLO except that the system periodically switches back to the other controller instead of periodic restart. 

Proactive restart mechanism \cite{abdi2018guaranteed,romagnoli2019design} utilizes a timer to trigger the $restart$ irrespective of whether the CPS is corrupted or not. Therefore at the worst case, the system spends $N_6$ epochs in the $corrupted$ status which is same as the up-time of the main controller. The CPS elapses  $N_7$ epochs in $restart$ status during which a trusted image of software is loaded from read only memory unit. Unlike YOLO, a secured execution interval program, modeled as $SC$, is invoked for $N_8$ epochs after the restart. During this time all the external interfaces are disabled, safety controller program is activated, and an external hardware timer is scheduled for next restart. After the end of this interval, the system returns to the $normal$ status where the main controller program is activated and the external interfaces are enabled. Reference \cite{abdi2018guaranteed} uses online reachability analysis that determines the values of $N_6$ and $N_8$, whereas \cite{romagnoli2019design} provides an offline method to compute these timing parameters.

Unlike proactive restart, reactive restart mechanism does not have any timer for restart \cite{niu2022verifying}. Instead, it relies upon crash signal as the trigger for restart. The system will remain in the $corrupted$ status for $N_9$ epochs until crash causes the system to restart. During the $restart$ status, the controller will be rebooted and reinitialized, which takes $N_{10}$ epochs. Then the CPS will resume the $normal$ operation. It is assumed that there is an exploitation window of $N_{11}$ epochs before which the adversary cannot intrude the system again \cite{niu2022verifying}.

The class of Simplex architectures \cite{sha2001using,bak2009system,mohan2013s3a} relies upon a verified controller which is assumed to be fault- and attack-free. A trusted decision module decides when the system switches to the safety controller by monitoring the system states or side channels. Decision module also decides when the main controller needs to be switched back from $SC$ to $normal$ where the main controller is active.

\subsection{Proposed Framework}

In this subsection, we first construct a hybrid system to model the CPS implementing the aforementioned CRAs. Our hybrid system is given by $H=(\mathcal{X},\mathcal{U},\mathcal{L},\mathcal{Y},\mathcal{Y}_0,Inv_x,Inv_u,\mathcal{F},\allowbreak \Sigma,\Gamma)$ where
\begin{itemize}
    \item $\mathcal{X}\subseteq\mathbb{R}^{n}$ is the continuous state space that represents the states of the physical subsystem. 
    \item $\mathcal{U}\subseteq\mathbb{R}^m$ is the set of admissible control inputs of the physical subsystem.
    \item $\mathcal{L}=\{normal, corrupted, restart, \allowbreak restoration, SC,   switching, \allowbreak unsafe\}$ is a set of discrete locations which corresponds to the statuses of the system. At each epoch, the system must be at some location $l\in\mathcal{L}$. 
    \item $\mathcal{Y}=\mathcal{X}\times\mathcal{L}$ is the state space of the hybrid system $H$.
    \item $\mathcal{Y}_0$ is the set of initial states of the hybrid system $H$ where $\mathcal{Y}_0 \subset \mathcal{Y}$.
    \item $Inv_x:\mathcal{L}\rightarrow 2^\mathcal{X}$ is the invariant that maps from the set of locations to the power set of $\mathcal{X}$. Function $Inv_x(l)$ specifies the set of possible continuous states when the system is at $l \in \mathcal{L}$.
    \item $Inv_u:\mathcal{L}\rightarrow 2^\mathcal{U}$ is the invariant that maps from the set of locations to the power set of $\mathcal{U}$. Function $Inv_u(l)$ specifies the set of admissible inputs when the system is at $l \in \mathcal{L}$.
    \item $\mathcal{F}$ is the set of vector fields. For each $F\in\mathcal{F}$, the continuous system state evolves as $\dot{x}=F(x,u,l)$, where $x \in Inv_x(l)$, $u \in Inv_u(l)$, and $\dot{x}$ is the time derivative of continuous state $x$. Here $F$ is jointly determined by the system dynamics and the status of the cyber subsystem.
    \item $\Sigma\subseteq \mathcal{Y}\times \mathcal{Y}$ is the set of transitions between the states of the hybrid system $H$. 
    A transition $\sigma=((x,l),(x',l'))$ models the state transition from $(x,l)$ to $(x',l')$.
    \item $\Gamma$ is a finite alphabet set. Each $\gamma\in\Gamma$ is labeled on some transition $\sigma\in \Sigma$ representing the events that triggers the transition.
\end{itemize}

Fig. \ref{fig:unify framework} presents our proposed hybrid model $H$. In the figure, each node (depicted by circles) represents a location $l \in \mathcal{L}$ and each directed edge (denoted by arrow) represents a transition $\sigma \in \Sigma$. We remark that  in our setting there is no discontinuity in the physical states and epoch indices for any of the transitions. 
\begin{figure}[!htp] 
    \centering
    \includegraphics[scale=0.4]{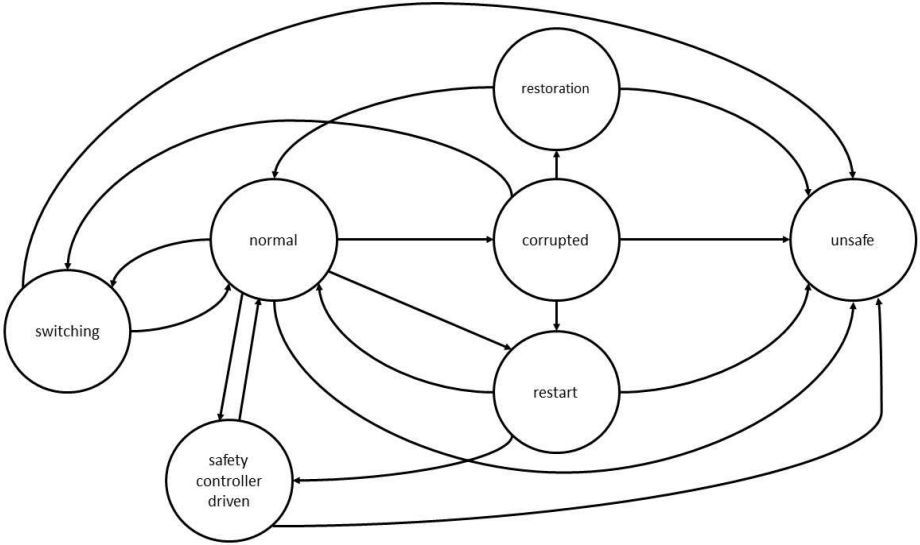} 
    \caption{An illustration of the proposed hybrid system model $H$. The statuses are depicted as circles, and transitions among the statuses are captured by arrows. This hybrid model captures the evolution of cyber statuses of the CPS that employ resilient architectures.}
    \label{fig:unify framework}
\end{figure}
Hence for simplicity we will denote a transition $\sigma=((x,l),(x',l')) \in \Sigma$ as $(l,l')$. Note that some transitions are valid for specific architectures employed by the CPS. For example, transition $(corrupted,restoration)$ is valid for BFT++ while the transition from $(corrupted,restart)$ is not. A run (i.e., a sequence of successive transitions) in the model $H$ determines the evolution of the CPS statuses with time. Tracking successive transitions in the model $H$ provides a means to analyze CPS safety.

\begin{table*}[!b] 
    \centering
\begin{tabular}{|l|l|l|}
\hline
Transition & Label & Valid Architecture(s) \\ \hline \hline
$(normal,corrupted)$       & $cyber~intrusion$  &  \begin{tabular}[c]{@{}l@{}} Any CRA \\ \cite{sha2001using,bak2009system,mohan2013s3a,mertoguno2019physics,arroyo2019yolo,arroyo2017fired,abdi2018guaranteed,romagnoli2019design,niu2022verifying,gamarra2019dual} \end{tabular} \\ \hline
$(corrupted,restoration)$  & $controller~crash$       & BFT++ \cite{mertoguno2019physics}                   \\ \hline
$(restoration,normal)$    & $controller~restored$      &   BFT++ \cite{mertoguno2019physics}               \\ \hline
$(corrupted,restart)$  &  $controller~crash$     &   YOLO \cite{arroyo2019yolo,arroyo2017fired}, Reactive Restart  \cite{niu2022verifying}               \\ \hline
$(corrupted,restart)$  &  $restart~timer$     &   YOLO \cite{arroyo2019yolo,arroyo2017fired}, Proactive Restart \cite{abdi2018guaranteed,romagnoli2019design}               \\ \hline
$(normal,restart)$            & $restart~timer$      &   YOLO \cite{arroyo2019yolo,arroyo2017fired}, Proactive Restart \cite{abdi2018guaranteed,romagnoli2019design}                    \\ \hline
$(restart,normal)$            & $controller~restarted$       &  YOLO \cite{arroyo2019yolo,arroyo2017fired}, Reactive Restart  \cite{niu2022verifying}                 \\ \hline
$(restart,SC)$            & $controller~restarted$      &    Proactive Restart \cite{abdi2018guaranteed,romagnoli2019design}                \\ \hline
$(SC,normal)$            & $SEI~ended$     &    Proactive Restart \cite{abdi2018guaranteed,romagnoli2019design}                \\ \hline
$(normal,switching)$            & $switching~timer$      &  Dual Redundant \cite{gamarra2019dual}              \\ \hline
$(corrupted,switching)$            & $switching~timer$      &  Dual Redundant \cite{gamarra2019dual}              \\ \hline
$(switching,normal)$            & $switching~completed$      &   Dual Redundant \cite{gamarra2019dual}  \\ \hline
$(corrupted,SC)$            & $safety~controller~invoked$     &  Simplex/S3A \cite{sha2001using,bak2009system,mohan2013s3a}                  \\ \hline
$(SC,normal)$            &  $main~controller~invoked$     &   Simplex/S3A \cite{sha2001using,bak2009system,mohan2013s3a}                \\ \hline
$(l,unsafe)_{l\neq unsafe}$            & $safety~region~crossed$      & \begin{tabular}[c]{@{}l@{}} Any CRA \\ \cite{sha2001using,bak2009system,mohan2013s3a,mertoguno2019physics,arroyo2019yolo,arroyo2017fired,abdi2018guaranteed,romagnoli2019design,niu2022verifying,gamarra2019dual} \end{tabular}                \\ \hline
\end{tabular}
\caption{Transitions in the hybrid system model $H$.  For each transition in the first column, the corresponding label and the corresponding architectures for which the transition is valid are listed in the second and third column respectively.}
    \label{table:transition}
\end{table*}

Each transition is labeled with an alphabet set $\gamma \in \Gamma$ where $\gamma$ represents the event that triggers the transition. For example, the transition $(normal,corrupted)$ is triggered by a  cyber intrusion whereas the transition $(corrupted,restoration)$ is triggered by a controller crash. The details of all labels and corresponding transitions are given in Table $\ref{table:transition}$.

In Fig. \ref{fig:unify framework}, we observe that location $l=unsafe$ is absorbing since it has an incoming edge from each other node and does not have any outgoing edge. It implies that if the CPS ever transits to $l=unsafe$, it will remain there since safety constraint has been violated. Thus our focus in this study is to avoid any run on the hybrid model $H$ that reaches $l=unsafe$.

\subsection{Problem Formulation}


This subsection states the problem of interest. We consider the hybrid system $H$ that represents the CPS employing any of the CRAs. Let $t_1$ be the time instant when the system transits to $corrupted$ status triggered by a cyber intrusion.
The CRAs employed by the system tries to recover the system so that the system goes back to $normal$ status at some time instant $\tilde{t} > t_1$. Let $t_2$ be the first time when the system again transits to $corrupted$ status following a new cyber intrusion where $t_2 \geq \tilde{t}$. We define the time interval $[t_1,t_2]$ as an \emph{attack cycle}. The length $A= t_2-t_1$ of each attack cycle varies and is dependent on the capabilities of the adversary and the vulnerability of cyber subsystem. Here we focus on guaranteeing the system safety for an arbitrary attack cycle and thus for all time $t\geq 0$. Our problem in this paper is formalized as follows:

\begin{problem} \label{prob1}
Synthesize a control policy $\mu:\mathcal{X}\times\{normal,SC\} \rightarrow\mathcal{U}$ and design the timing parameters for the CPS such that the system never reaches the location $l=unsafe$ during an attack cycle.
\end{problem}

Here the timing parameters quantify how many epochs the system can remain at each status. The above problem is equivalent to computing a control policy and the timing parameters such that $x_t \in \mathcal{C}~~ \forall [t_1,t_2]$. Our method for computation of the control policy and timing parameters is detailed in Section \ref{sec:sol}.


\section{Proposed Solution Approach}\label{sec:sol}

In this section, we first provide some preliminary background on barrier certificates and sum-of-squares polynomials. Then we present our proposed solution to the Problem \ref{prob1}. 

\subsection{Preliminaries}

In this subsection, we present necessary preliminaries which will be useful in deriving our solution to the formulated problem. A continuous function $\alpha:[-b,a)\rightarrow(-\infty,\infty)$ belongs to the extended class $\mathcal{K}$ if it is strictly increasing and $\alpha(0)=0$ for some $a,b>0$. A set $\mathcal{C}$ is called forward invariant if $x_t \in \mathcal{C}~~ \forall t \geq t_0$ given that $x_{t_0} \in \mathcal{C}$.

The Lie derivative of function $h(x)$ with respect to dynamics \eqref{eq:dynamics} is given by $L_fh(x)+L_g h(x)u$ where $L_fh(x)=\frac{\partial h}{\partial x}(x) f(x)$ and $L_gh(x)=\frac{\partial h}{\partial x}(x) g(x)$ denote the Lie derivatives along $f(x)$ and $g(x)$ respectively. The higher order Lie derivatives of $h(x)$ along $f(x)$ are obtained inductively via $L_f^ih(x)=L_fL_f^{i-1}h(x)$ where $i \in \mathbb{Z}_{>0}$ and $L_f^0 h(x)= h(x)$. \emph{Relative degree} of a continuously differentiable function on a set with respect to a dynamics is defined by the minimum number of times Lie derivatives of the function needs to be taken along the dynamics such that control input explicitly appears in the expression \cite{khalil1996nonlinear,wang2021learning}. This is formally defined as follows:
\begin{definition} [\cite{khalil1996nonlinear,wang2021learning}]
The relative degree of a function $h(x)$ is $r \in \mathbb{Z}_{>0}$ on the set $\mathcal{X}$ with respect to dynamics \eqref{eq:dynamics} if $h(x)$ is $r$-th order continuously differentiable on $\mathcal{X}$ and $L_gL_f^{r-1}h(x) \neq 0$ and  $L_gL_f^ih(x)=0$ for all $i\in \{0,1, \cdots, r-2\}$  and for all $x \in \mathcal{X}$. 
\end{definition}
In this paper, we denote the $i$-th order Lie derivative of $h(x)$ as $h^i(x)$. We abuse this notation by assuming $h^i(x)=h(x)$ when $i=0$. Note that we can write $h^i(x)= L_f^ih(x),~~\forall i=\{0, 1, \cdots r-1\}$ and $h^r(x)=L_f^rh(x)+L_gL_f^{r-1}h(x) \mu(x)$ given that the relative degree of $h(x)$ is $r$ and $u$ follows some control policy $\mu(x)$.

A multivariate polynomial $p(x)$ is a sum-of-squares (SOS) polynomial if there exists a set of polynomials $q_1(x), q_2(x), \cdots, q_l(x)$ such that $p(x)= \sum_{i=1}^l (q_i(x))^2$. If $p(x)$ is a SOS polynomial then we have that $p(x) \geq 0$. SOS and non-negativity are equivalent when the polynomial is quadratic or univariate \cite{powers2011positive}. In this paper, we denote the set of SOS polynomials over $x\in\mathbb{R}^n$ as $\mathcal{S}(x)$.

\subsection{Control Synthesis and Timing Parameters Design for Safety}

In this subsection, we present our proposed solution approach to control synthesis and timing parameters design for the system's safety. We first derive a set of sufficient conditions for a control policy along with timing parameters so that safety of the CPS is guaranteed. Next we encode the derived conditions as a set of sum-of-squares (SOS) constraints under certain assumptions. Using the SOS formulation, we propose an algorithm to compute the control policy and associated timing parameters. We also analyze convergence of our proposed algorithm.

Recall that safety of the CPS is defined over a set $\mathcal{C} = \{x \in \mathcal{X}:h(x)\geq 0\}$. We suppose that the relative degree of the function $h(x)$ on set $\mathcal{C}$ with respect to dynamics \eqref{eq:dynamics} is $r$. When $r >1$, the control input $u$ does not appear in the first order Lie derivative of $h(x)$ since $L_gh(x)=0$. In that case, solution methods based on first order Lie derivative of $h(x)$  \cite{niu2022analytical,niu2022verifying} are not applicable to guarantee safety anymore. To address this limitation, here we derive sufficient conditions for control policy which holds for any $r \in \mathbb{Z}_{>0}$. To do so, we first derive a result on the forward invariance of a function for higher order Lie derivatives. We present the result below.

\begin{lemma} \label{FI_HO}
Let $\mathcal{A}=\{x_t:h(x_t) \geq c_0\} \cap \{x:h^1(x_t) \geq c_1\} \cap \cdots \cap \{x_t:h^{p-1}(x_t) \geq c_{p-1}\}$ for some $c_0, c_1, \cdots , c_{p-1} \geq 0$ and $p  = 1, 2, \cdots, r$ where $r$ is the relative degree of $h(x)$ on the set $\mathcal{C}$ with respect to dynamics \eqref{eq:dynamics}. If $h^p(x_t)+ \alpha(h^{p-1}(x_t)-c_{p-1}) \geq 0$ holds for all $x _t\in \mathcal{A}$, then $\mathcal{A}$ is forward invariant. That is, $x_t\in\mathcal{A}$ for all $t\geq t_0$ given that $x_{t_0} \in\mathcal{A}$.
\end{lemma}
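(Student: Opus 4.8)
The plan is to translate the nested set membership into statements about the scalar quantities $\eta_i(t):=h^i(x_t)-c_i$ for $i=0,1,\dots,p-1$ and then propagate non-negativity through this chain. Since $p\le r$, none of $h^0,\dots,h^{p-1}$ involves the control input, so the chain rule yields $\dot\eta_i(t)=h^{i+1}(x_t)=\eta_{i+1}(t)+c_{i+1}$ for $i=0,\dots,p-2$ and, by the definition of $h^p$, $\dot\eta_{p-1}(t)=h^p(x_t)$. Observe that $x_t\in\mathcal A$ iff $\eta_i(t)\ge 0$ for every $i$, and that $x_{t_0}\in\mathcal A$ gives $\eta_i(t_0)\ge 0$ for all $i$.

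The argument rests on two ingredients. The first is a cascade (integration) step: on any time interval starting at $t_0$ on which $\eta_{p-1}\ge 0$ holds, one shows downward from $i=p-2$ that $\dot\eta_i=\eta_{i+1}+c_{i+1}\ge 0$, hence $\eta_i$ is non-decreasing and therefore $\eta_i(t)\ge\eta_i(t_0)\ge 0$; consequently the whole trajectory stays in $\mathcal A$ on that interval, and it suffices to establish $\eta_{p-1}(t)\ge 0$ for all $t\ge t_0$. The second is a comparison/Nagumo step for the top function: whenever $x_t\in\mathcal A$, the hypothesis is exactly $\dot\eta_{p-1}(t)\ge-\alpha(\eta_{p-1}(t))$; since $\alpha$ belongs to the extended class $\mathcal{K}$, $\alpha(0)=0$, so $z\equiv 0$ solves $\dot z=-\alpha(z)$ with $z(t_0)=0\le\eta_{p-1}(t_0)$, and the comparison lemma keeps $\eta_{p-1}$ at or above $0$ for as long as this differential inequality remains in force.

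To glue these together I would run a first-exit argument. Let $T=\sup\{\tau\ge t_0: x_t\in\mathcal A \text{ for all } t\in[t_0,\tau]\}$ and suppose, for contradiction, that $T$ is finite and within the interval of existence of $x_t$. On $[t_0,T)$ both ingredients apply, so $\eta_{p-1}\ge 0$ there; by continuity $\eta_{p-1}(T)\ge 0$, $x_T\in\mathcal A$, and the inequality $\dot\eta_{p-1}(T)=h^p(x_T)\ge-\alpha(\eta_{p-1}(T))$ passes to the limit. If $\eta_{p-1}(T)>0$, continuity keeps $\eta_{p-1}>0$ on a short interval past $T$, and the cascade step then extends membership in $\mathcal A$ beyond $T$, contradicting maximality of $T$. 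If $\eta_{p-1}(T)=0$, then $\dot\eta_{p-1}(T)\ge-\alpha(0)=0$, so $\eta_{p-1}$ cannot decrease out of $T$ and the trajectory is held on (or pushed back from) the face $\{\eta_{p-1}=0\}$, again contradicting exit from $\mathcal A$. Hence $T=\infty$, i.e., $x_t\in\mathcal A$ for all $t\ge t_0$.

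I expect the delicate point to be precisely this gluing step, because the top-level inequality is only assumed on $\mathcal A$ itself rather than on a neighbourhood, so it cannot simply be integrated globally; the borderline case $\eta_{p-1}(T)=0$ with $\dot\eta_{p-1}(T)=0$ is what actually requires the Nagumo (sub-tangentiality) viewpoint, namely that $\dot\eta_i\ge 0$ on each active face $\{\eta_i=0\}\cap\mathcal A$ — which I verified above for all $i$ — with the constraint qualification furnished by the relative-degree hypothesis (the gradients $\nabla h,\nabla h^1,\dots,\nabla h^{p-1}$ are linearly independent). If $\alpha$ is only continuous rather than locally Lipschitz, I would make the comparison step rigorous by comparing against the perturbed flow $\dot z=-\alpha(z)-\varepsilon$ and letting $\varepsilon\downarrow 0$. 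This mirrors the standard forward-invariance proofs for barrier certificates referenced earlier.
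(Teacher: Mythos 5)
Your proof is correct and follows essentially the same route as the paper's: both reduce forward invariance of $\mathcal{A}$ to a face-by-face sub-tangentiality (Nagumo) check, using $h^{i+1}(x_t)\geq c_{i+1}\geq 0$ on the lower faces and the extended class-$\mathcal{K}$ hypothesis (yielding $h^p(x_t)\geq-\alpha(0)=0$ on the top face). Your explicit comparison-lemma and first-exit gluing makes rigorous a step the paper leaves implicit by directly citing Nagumo's theorem, but it is a refinement of the same argument rather than a different one.
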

\begin{proof} 
We denote the boundary of a set $\mathcal{A}_i=\{x:h^i(x_t) \geq c_i\}$ as $$\partial\mathcal{A}_i=\{x:h^i(x_t) = c_i\},~~ \forall i=0,1, \cdots, p-1.$$ Suppose $x_{t_0} \in \mathcal{A}$. Note that when $x_t \in \partial\mathcal{A}_{p-1} \cap \mathcal{A}$, we get $h^p(x_t) \geq - \alpha(0) =0 $ which implies $h^{p-1}(x_t) \geq c_{p-1}~~ \forall t \geq t_0$. This also implies the proof is complete for the case $p=1$. Now we consider the case $p \geq 2$. For this we first show that $h^{i}(x_t) \geq c_i,~~ \forall t \geq t_0,~~ \forall i =0, 1, \cdots, p-2$. Since $c_0, c_1, \cdots , c_{p-1} \geq 0$, we have that $h^i(x_t) \geq 0~~ \forall i =0, 1, \cdots, p-1$ if $x_t \in \mathcal{A}$. Now consider the case where $x_t \in \partial\mathcal{A}_i \cap \mathcal{A}$ for any $i =0, 1, \cdots, p-2$. Since we have $h^{i+1}(x_t) \geq 0$,  according to Nagumo's theorem \cite{blanchini2008set}, we have that $h^i(x_t) \geq c_i,~~ \forall t \geq t_0$. Since this holds for any $i =0, 1, \cdots, p-2$ and previously we have shown that $h^{p-1}(x_t) \geq c_{p-1},~~ \forall t \geq t_0$, hence we conclude that $x _t \in \mathcal{A}$ for all $t \geq t_0$. This completes the proof for $p=  1, 2, \cdots, r$.
\end{proof}

Now using the above result we will derive sufficient conditions for a control policy to ensure safety in the presence of cyber attack, considering that the policy is implemented as specified by the timing parameters. We will derive the conditions which can be applied to any CPS regardless of its resilient architecture. 
We will consider an arbitrary run on the hybrid system $H$ that can possibly visit any number of locations before returning to $normal$ for the purpose of generality. This run starts from the $corrupted$ location following a cyber attack.
Our objective is to find a level set $\mathcal{A}$ and control policy $\mu(x)$ for the recovered or safety controller so that the CPS remains safe (i.e., never reach the location $l=unsafe$) during any attack cycle. We also enforce that the CPS returns to the computed level set $\mathcal{A}$ so that safety of the CPS can be guaranteed for consecutive attacks. Our derived conditions are presented in the following theorem.


\begin{theorem} \label{main_thrm}
Consider the hybrid model $H$ from Section \ref{sec:framework}.  Let $\mathcal{A}$ be a level set defined over higher order Lie derivatives of $h(x)$ as $$\mathcal{A}=\{x:h(x) \geq c_0\} \cap \{x:L_fh(x) \geq c_1\} \cap \cdots \cap \{x:L_f^{r-1}h(x) \geq c_{r-1}\}$$ for some $c_0, c_1, \cdots, c_{r-1} \geq 0$. Consider an attack cycle $[t_1,t_2]$ for which the system undergoes a sequence of successive transitions $(l_1,l_2)$,$(l_2,l_3), \cdots$, $(l_{k-1},l_k)$ in the hybrid model until the controllers are recovered or safety controller is invoked, where $l_1=corrupted$. Let $\tau_1, \tau_2, \cdots, \tau_k+\Delta$ be the amount of time elapsed in each locations $l_1,l_2, \cdots, l_k$, respectively, where $\Delta=(t_2-t_1)-\sum_{i=1}^{k} \tau_i \geq 0$. If there exist constants $s_1, s_2, \cdots, s_k$ and a control policy $\mu(x) \in Inv_u(l_k)$ which is applied at location $l_k$ such that
\begin{subequations} \label{eq:HOBC}
\begin{align}
    & L_f^r h(x)+L_g L_f^{r-1}h(x)u -s_j \geq 0, ~~~~\forall (x,u) \in \mathcal{C} \times Inv_u(l_j)  \label{eq:HOBC 1}, \forall j = 1, 2, \cdots, k-1 \\
    & L_f^r h(x)+L_g L_f^{r-1}h(x) \mu(x)-s_k \geq 0, ~~~~\forall x \in \mathcal{C} \backslash \mathcal{A} \label{eq:HOBC 2} \\
    & L_f^r h(x)+L_g L_f^{r-1}h(x) \mu(x) + \alpha(L_f^{r-1}h(x)-a_{0,r-1}) \geq 0, ~~~~\forall x \in \mathcal{A} \label{eq:HOBC 3} \\
    & \sum_{i=1}^r \frac {a_{j-1,r-i}(t-\xi_j)^{r-i}}{(r-i)!} + \frac{s_j(t-\xi_j)^r} {r!}   \geq 0, ~~~~ \forall t \in [\xi_j,\xi_{j+1}] \label{eq:HOBC 4}, \forall j= 1, 2, \cdots, k\\
    & a_{k,r-p} \geq a_{0,r-p}, ~~~~ \forall p =1,2, \dots ,r  \label{eq:HOBC 5}
\end{align}
\end{subequations}
where 
\begin{subequations} \label{eq:HOBC_ex}
\begin{align}
& a_{0,i}=c_i, \forall i=0,1, \dots ,r-1  \label{eq:HOBC_ex 1} \\
& a_{j,r-p}=\sum_{i=1}^p \frac {a_{j-1,r-i} \tau_j^{p-i}}{(p-i)!}+ \frac{s_j \tau_j^p} {p!}, ~~~~ \forall p =1,2, \dots ,r;~~ \forall j =1,2, \dots ,k \label{eq:HOBC_ex 2}\\
& \xi_j=\sum_{i=0}^{j-1} \tau_i,~~~~ \forall j =1,2, \dots ,k+1~~ \text{with} ~~ \tau_0=t_1. \label{eq:HOBC_ex 3}
\end{align}
\end{subequations}
then the system \eqref{eq:dynamics} is safe i.e., $x_t \in \mathcal{C},~~ \forall t \in [t_1,t_2]$ given that $x_{t_1} \in \mathcal{A}$. Furthermore,  $x_t \in \mathcal{A}$  for $t \in [\xi_{k+1},t_2]$.
\end{theorem}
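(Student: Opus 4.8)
The plan is to follow, along any trajectory of \eqref{eq:dynamics}, the cascade of Lie derivatives $h(x_t),h^1(x_t),\dots,h^{r-1}(x_t)$ one location at a time, and to close the argument at the last visited location $l_k$ by invoking Lemma \ref{FI_HO}. The starting observation is that, since $h$ has relative degree $r$, for any applied input $u_t$ one has $\frac{d}{dt}h^i(x_t)=h^{i+1}(x_t)$ for $i=0,\dots,r-2$ and $\frac{d}{dt}h^{r-1}(x_t)=L_f^r h(x_t)+L_gL_f^{r-1}h(x_t)\,u_t$. Hence, on any time interval during which $x_t\in\mathcal{C}$ and the system sits at a location $l_j$ with $j<k$, condition \eqref{eq:HOBC 1} (used with $x=x_t$ and $u=u_t\in Inv_u(l_j)$) reduces to the scalar differential inequality $\frac{d}{dt}h^{r-1}(x_t)\ge s_j$; integrating the cascade $r$ times from $\xi_j$ with initial data $h^i(x_{\xi_j})\ge a_{j-1,i}$ produces the polynomial lower bounds $h^{r-p}(x_t)\ge \sum_{i=1}^{p}\frac{a_{j-1,r-i}(t-\xi_j)^{p-i}}{(p-i)!}+\frac{s_j(t-\xi_j)^p}{p!}$ for $p=1,\dots,r$. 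The $p=r$ instance is exactly the left-hand side of \eqref{eq:HOBC 4}, and evaluating the $p$-th bound at $t=\xi_{j+1}=\xi_j+\tau_j$ reproduces the closed form \eqref{eq:HOBC_ex 2} for $a_{j,r-p}$.

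With this in hand, I would run an induction over the locations $l_1,\dots,l_{k-1}$ visited before the system reaches $l_k$. The base case is the hypothesis $x_{t_1}\in\mathcal{A}$, i.e.\ $h^i(x_{\xi_1})\ge c_i=a_{0,i}$ (using $\xi_1=\tau_0=t_1$ and \eqref{eq:HOBC_ex 1}). For the inductive step at $l_j$, the bounds above together with \eqref{eq:HOBC 4}---which asserts precisely that the $p=r$ bound is nonnegative throughout $[\xi_j,\xi_{j+1}]$---force $h(x_t)\ge 0$, i.e.\ $x_t\in\mathcal{C}$, on all of $[\xi_j,\xi_{j+1}]$; evaluating at $t=\xi_{j+1}$ then gives $h^i(x_{\xi_{j+1}})\ge a_{j,i}$, which advances the induction. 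After the step $j=k-1$ we have $x_t\in\mathcal{C}$ on $[t_1,\xi_k]$ and $h^i(x_{\xi_k})\ge a_{k-1,i}$.

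At the last location $l_k$, where $\mu(x)\in Inv_u(l_k)$ is applied, I would split time at $T_{\mathcal{A}}:=\inf\{t\ge\xi_k:x_t\in\mathcal{A}\}$. For $\xi_k\le t<T_{\mathcal{A}}$ the state lies in $\mathcal{C}\setminus\mathcal{A}$, so \eqref{eq:HOBC 2} plays the role of \eqref{eq:HOBC 1} with the constant $s_k$ and policy $\mu$; the same integration gives the polynomial bounds with $j=k$ and, via \eqref{eq:HOBC 4}, $x_t\in\mathcal{C}$. If $T_{\mathcal{A}}$ were strictly larger than $\xi_{k+1}$, then $x_t\in\mathcal{C}\setminus\mathcal{A}$ on all of $[\xi_k,\xi_{k+1}]$ and evaluating the bounds at $\xi_{k+1}$ would give $h^i(x_{\xi_{k+1}})\ge a_{k,i}\ge a_{0,i}=c_i$ for every $i$ by \eqref{eq:HOBC 5}, i.e.\ $x_{\xi_{k+1}}\in\mathcal{A}$---a contradiction; hence $T_{\mathcal{A}}\le\xi_{k+1}$ and $x_t\in\mathcal{C}$ on $[\xi_k,T_{\mathcal{A}}]$. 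For $t\ge T_{\mathcal{A}}$, the state lies in the closed set $\mathcal{A}$ at $T_{\mathcal{A}}$ and \eqref{eq:HOBC 3} is exactly the hypothesis of Lemma \ref{FI_HO} with $p=r$ and $c_i$ replaced by $a_{0,i}$, so $\mathcal{A}$ is forward invariant under $\mu$; since $c_0\ge 0$ gives $\mathcal{A}\subseteq\mathcal{C}$, we obtain $x_t\in\mathcal{A}\subseteq\mathcal{C}$ for all $t\ge T_{\mathcal{A}}$, in particular on $[\xi_{k+1},t_2]$. Concatenating the regimes $[t_1,\xi_k]$, $[\xi_k,T_{\mathcal{A}}]$ and $[T_{\mathcal{A}},t_2]$ yields $x_t\in\mathcal{C}$ on $[t_1,t_2]$, and $T_{\mathcal{A}}\le\xi_{k+1}$ gives $x_t\in\mathcal{A}$ on $[\xi_{k+1},t_2]$.

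The step I expect to be the real obstacle is the bootstrapping inside each location: conditions \eqref{eq:HOBC 1}--\eqref{eq:HOBC 2} only bound $\frac{d}{dt}h^{r-1}(x_t)$ while $x_t\in\mathcal{C}$, yet what I need to extract from \eqref{eq:HOBC 4} is that $x_t$ never leaves $\mathcal{C}$ in the first place. I would handle this with a maximal-interval (continuity) argument: let $T$ be the supremum of times up to which $x_\tau\in\mathcal{C}$ and all of the polynomial bounds hold; this time set is nonempty and closed, on $[\xi_j,T]$ the bounds combined with \eqref{eq:HOBC 4} give $h(x_T)\ge 0$, and continuity of $h$ and of its Lie derivatives along the trajectory lets me push $T$ forward unless $T=\xi_{j+1}$. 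The only delicate case is when the polynomial in \eqref{eq:HOBC 4} touches zero at an interior time, which is resolved by a short tangency analysis (all lower-order Lie derivatives of $h$ along the trajectory must then equal the corresponding bounds at that time, and propagating this through the cascade together with nonnegativity of the polynomial on the entire interval rules out an exit). A minor bookkeeping point is to verify that the iterated initial data $a_{j,i}$ generated by the repeated integration indeed matches the closed form \eqref{eq:HOBC_ex 2}, with $\xi_{j+1}=\xi_j+\tau_j$ following from \eqref{eq:HOBC_ex 3}.
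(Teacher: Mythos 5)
Your proposal is correct and follows essentially the same route as the paper's own proof: iterated integration of the Lie-derivative cascade location by location to obtain the polynomial lower bounds matching \eqref{eq:HOBC_ex 2}, then \eqref{eq:HOBC 4} for safety on each $[\xi_j,\xi_{j+1}]$, \eqref{eq:HOBC 5} for re-entry into $\mathcal{A}$, and Lemma \ref{FI_HO} for forward invariance on $[\xi_{k+1},t_2]$. The two places where you are more careful than the paper --- introducing the first-entry time $T_{\mathcal{A}}$ at location $l_k$ (since \eqref{eq:HOBC 2} only holds on $\mathcal{C}\setminus\mathcal{A}$ and the trajectory may enter $\mathcal{A}$ before $\xi_{k+1}$) and the maximal-interval bootstrapping needed because \eqref{eq:HOBC 1}--\eqref{eq:HOBC 2} are only assumed on $\mathcal{C}$ --- are genuine refinements of gaps the paper's proof leaves implicit, not a different argument.
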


\begin{proof} From the statement of the theorem, we suppose that $x_{t_1} \in \mathcal{A}$, which implies $$h(x_{t_1}) \geq c_0=a_{0,0},\quad h^1(x_{t_1}) \geq c_1=a_{0,1},~\cdots,~h^{r-1}(x_{t_1}) \geq c_{r-1}=a_{0,r-1}.$$ First consider the case where the system is at location $l_1$. Since $h^r(x)= L_f^r h(x)+L_g L_f^{r-1}h(x)u$, therefore by taking integration over $h^r(x_t)$ and using \eqref{eq:HOBC 1} with $j=1$, we can write for all $(x,u) \in \mathcal{C} \times Inv_u(l_1) $ and $t \in [t_1,t_1+\tau_1]=[\xi_1,\xi_2]$

\begin{equation} \label{prf1_1st_int}
h^{r-1}(x_t)= h^{r-1}(x_{\xi_1})+\int_{t=\xi_1}^{t} h^r(x_t)~~dt \geq a_{0,r-1}+s_1(t-\xi_1)
\end{equation}
By taking integration over $h^{r-1}(x_t)$, we have that  for all $(x,u) \in \mathcal{C} \times Inv_u(l_1) $ and $t \in [\xi_1,\xi_2]$
\begin{align} \label{prf1_2nd_int}
h^{r-2}(x_t) &= h^{r-2}(x_{\xi_1})+\int_{t=\xi_1}^{t} h^{r-1}(x_t)~~dt \nonumber \\
& \geq a_{0,r-2}+ \int_{t=\xi_1}^{t} \big(a_{0,r-1}+s_1(t-\xi_1)\big)~dt = a_{0,r-2}+a_{0,r-1}(t-\xi_1)+\frac{s_1(t-\xi_1)^2}{2}
\end{align}
where the inequality follows from \eqref{prf1_1st_int} and the fact that $h^{r-2}(x_{\xi_1}) \geq c_{r-2}=a_{0,r-2}$. Thus if we continue this by inductively taking integration over $h^r(x_t)$ for $p$ times where $p=1,2, \dots ,r$, we have that for all $(x,u) \in \mathcal{C} \times Inv_u(l_1) $ and $t \in [\xi_1,\xi_2]$
\begin{equation} \label{prf1_p_int}
h^{r-p}(x_t) \geq  \sum_{i=1}^p \frac {a_{0,r-i} (t-\xi_1)^{p-i}}{(p-i)!}+ \frac{s_1 (t-\xi_1)^p} {p!}
\end{equation}
From \eqref{prf1_p_int}, \eqref{eq:HOBC_ex 2} and \eqref{eq:HOBC_ex 3} we have that for all $p = 1, 2, \dots ,r $ 
\begin{equation} \label{prf1_next_a}
h^{r-p}(x_{\xi_2}) \geq  \sum_{i=1}^p \frac {a_{0,r-i} (\xi_2-\xi_1)^{p-i}}{(p-i)!}+ \frac{s_1 (\xi_2-\xi_1)^p} {p!}= \sum_{i=1}^p \frac {a_{0,r-i} {\tau_1}^{p-i}}{(p-i)!}+ \frac{s_1 {\tau_1}^p} {p!}= a_{1,r-p}
\end{equation}
Furthermore, by letting $r=p$ in \eqref{prf1_p_int} and using \eqref{eq:HOBC 4} with $j=1$ we get for all 
$t \in [\xi_1,\xi_2]$
\begin{equation} 
h(x_t) \geq \sum_{i=1}^r \frac {a_{0,r-i} (t-\xi_1)^{r-i}}{(r-i)!}+ \frac{s_1 (t-\xi_1)^r} {r!} \geq 0
\end{equation} 
Therefore, $x(t) \in \mathcal{C},~~ \forall t \in [\xi_1,\xi_2]$.
Now consider the case where the system is at location $l_2$. Taking integration over $h^r(x_t)$ and using \eqref{eq:HOBC 1} with $j=2$, we can write for all $(x,u) \in \mathcal{C} \times Inv_u(l_2) $ and $t \in [\xi_2,\xi_3]$
\begin{equation} \label{prf1_1st_int_l2}
h^{r-1}(x_t)= h^{r-1}(x_{\xi_2})+\int_{t=\xi_2}^{t} h^r(x_t)~~dt \geq a_{1,r-1}+s_2(t-\xi_2)
\end{equation}
The last inequality follows from \eqref{prf1_next_a} by letting $p=1$. Similar to before, by repeated integration we can write for all $(x,u) \in \mathcal{C} \times Inv_u(l_2)$, $t \in [\xi_2,\xi_3]$ and $p = 1,2, \dots ,r$
\begin{equation} \label{prf1_p_int_l2}
h^{r-p}(x_t) \geq  \sum_{i=1}^p \frac {a_{1,r-i} (t-\xi_2)^{p-i}}{(p-i)!}+ \frac{s_2 (t-\xi_2)^p} {p!}
\end{equation}
Similar to before, from \eqref{prf1_p_int_l2}, \eqref{eq:HOBC_ex 2} and \eqref{eq:HOBC_ex 3} we have that $h^{r-p}(x_{\xi_3}) \geq  a_{2,r-p}$ for all $p= 1, \dots ,r$. Also by letting $r=p$ in \eqref{prf1_p_int_l2} and using \eqref{eq:HOBC 4} with $j=2$, we get $h(x_t)\geq 0$ for all $t \in [\xi_2,\xi_3]$.

If we continue the above steps for locations $l_3,\cdots, l_{k-1}$ we can show that for all $(x,u) \in \mathcal{C} \times Inv_u(l_j) $ and $t \in [\xi_j,\xi_{j+1}]$ where $j=1, 2, \cdots, k-1$ and $p=1, 2, \cdots, r$
\begin{equation} \label{prf1_p_int_gen}
h^{r-p}(x_t) \geq  \sum_{i=1}^p \frac {a_{j-1,r-i} (t-\xi_j)^{p-i}}{(p-i)!}+ \frac{s_j (t-\xi_j)^p} {p!},
\end{equation}
$h^{r-p}(x_{\xi_{j+1}}) \geq a_{j,r-p}$ and $h(x_t) \geq 0$ i.e. $x(t) \in \mathcal{C}$.

Now consider the case when the system is at location $l_k$ where control policy $\mu(x)$ is applied. Integrating \eqref{eq:HOBC 2} for $p$ times and using the fact $h^{r-p}(x_{\xi_{j+1}}) \geq a_{j,r-p}$ where $j=1, 2, \cdots, k-1$ and $p=1, 2, \cdots, r$, we can write for all $x \in \mathcal{C} \backslash \mathcal{A}$ and $t \in [\xi_k, \xi_{k+1}]\cup [\xi_{k+1},\xi_{k+1}+\Delta]=[\xi_k,t_2]$
\begin{equation} \label{prf1_p_int_lk}
h^{r-p}(x_t) \geq  \sum_{i=1}^p \frac {a_{k-1,r-i} (t-\xi_k)^{p-i}}{(p-i)!}+ \frac{s_k (t-\xi_k)^p} {p!}
\end{equation}

Now \eqref{prf1_p_int_lk} and \eqref{eq:HOBC 4} with $j=k$ implies that $h(x_t) \geq 0$, i.e., $x(t) \in \mathcal{C},~~ \forall t  \in [\xi_k,\xi_{k+1}]$. Furthermore from \eqref{prf1_p_int_lk}, \eqref{eq:HOBC_ex 2} and \eqref{eq:HOBC_ex 3}  we get that $h^{r-p}(x_{\xi_{k+1}}) \geq a_{k,r-p}~~ \forall p = 1, 2, \dots ,r$. This along with \eqref{eq:HOBC 5} imply that for all $p=1, \dots ,r$ $$h^{r-p}(x_{\xi_{k+1}}) \geq a_{k,r-p} \geq a_{0,r-p} =c_{r-p}.$$ Therefore, $x(\xi_{k+1}) \in \mathcal{A}$. By using \eqref{eq:HOBC 3} and applying Lemma \ref{FI_HO} with $p=r$, we get that $x_t \in \mathcal{A},~~ \forall t \in [\xi_{k+1},t_2]= [t_2-\Delta,t_2]$. Since $\mathcal{A} \subset  \mathcal{C}$ and $\cup_{i=1}^{k} [\xi_i,\xi_{i+1}]\cup[\xi_{k+1},t_2]=[t_1,t_2]$, thereby $x_t \in \mathcal{C},~~\forall [t_1,t_2]$, which completes the proof.
\end{proof}

The above theorem gives a set of sufficient conditions for control policy and timing parameters of any employed CRA so that safety of the CPS is guaranteed. The value of $k$, the locations $l_1,l_2, \cdots, l_k$ and the timing parameters $\tau_1, \tau_2, \cdots, \tau_k$ depend on the architecture adopted by the CPS. For example, for BFT++ we have $k=3$, $l_1=corrupted$, $l_2=restoration$, $l_3=normal$, $\tau_1= N_1$ epochs, $\tau_2= N_2$ epochs and $\tau_3=N_3$ epochs. Therefore at $l=normal$ location the restored controller needs to apply the control policy $\mu(x)$ for at least $\tau_3$ time to guarantee safety. As the system returns to the level set $\mathcal{A}$ within one attack cycle, safety is also ensured for consecutive attacks as long as the attack cycle is greater than $\tau_1+\tau_2+\tau_3$. The mapping between the arbitrary parameters $\tau_1, \tau_2, \cdots, \tau_k$ in Theorem  \ref{main_thrm} and the timing parameters of CRAs is discussed in detail later in Table \ref{table:parameters}.

Conditions given in Theorem \ref{main_thrm} are more general than those derived in \cite{niu2022verifying,niu2022analytical} as it applies to the cases where relative degree $r > 1$ and number of locations $k > 3$. We can derive simpler form of Theorem \ref{main_thrm} by using specific values of $r$ and $k$. For example, by letting $r=1$ and $k=2$ we can recover the result in \cite{niu2022analytical} for hard safety constraint. This simplified form can be used for the worst-case analysis of a CPS with relative degree one by considering that control input at any location $l \in \{corrupted,restoration,restart,switching\}$ are malicious. Such result is given as below.

\begin{corollary}\label{coro:coro1}
For the hybrid system model $H$ from Section \ref{sec:framework}, let $\mathcal{A}$ be a level set defined as $\mathcal{A}=\{x:h(x) \geq c\}$. Consider an attack cycle $[t_1,t_2]$ where control input is malicious for $\eta$ time until the controller recovers. If there exist constants $c \geq 0$ and a control policy $\mu:\mathcal{X}\rightarrow\mathcal{U}$ such that
\begin{subequations} \label{eq:r1}
\begin{align}
    &\frac{\partial h}{\partial x}(x)(f(x)+g(x)u) + \frac{c}{\eta} \geq 0,~\forall (x,u)\in\mathcal{C}\times\mathcal{U}\label{eq:r1 1}\\
    &\frac{\partial h}{\partial x}(x)(f(x)+g(x)\mu(x))- \frac{c}{\tau} \geq 0,~\forall x\in\mathcal{C}\setminus \mathcal{A}\label{eq:r1 2}\\ 
    &\frac{\partial h}{\partial x}(x)(f(x)+g(x)\mu(x)) +\alpha(h(x)-c) \geq 0,~\forall x\in\mathcal{A} \label{eq:r1 3}
\end{align}
\end{subequations}
then the system \eqref{eq:dynamics} is safe i.e. $x_t \in \mathcal{C},~~ \forall t \in [t_1,t_2]$ by taking policy $\mu$ as desired control input given that $x_{t_1} \in \mathcal{A}$ and $\eta +\tau \leq t_2-t_1$. Furthermore,  $x_t \in \mathcal{A}$  for all $t \in [t_1+\eta+\tau,t_2]$.
\end{corollary}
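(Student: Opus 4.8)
The plan is to obtain Corollary~\ref{coro:coro1} as the specialization of Theorem~\ref{main_thrm} to relative degree $r=1$ and to a run with exactly $k=2$ locations, where $\eta$ is the (possibly lumped) total time the system spends under a malicious input before the controller recovers. First I would fix the dictionary between the two statements: set $l_1=corrupted$ with $Inv_u(l_1)=\mathcal{U}$ (the malicious phase, in which the adversary may pick any $u\in\mathcal{U}$) and $l_2=normal$ with $Inv_u(l_2)=\mathcal{U}$ (where the recovered controller applies $\mu$); take $\tau_1=\eta$, $\tau_2=\tau$, and $\Delta=(t_2-t_1)-\eta-\tau$, which is nonnegative exactly by the hypothesis $\eta+\tau\le t_2-t_1$. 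For the barrier data I would take $c_0=c$ (hence $a_{0,0}=c$), $s_1=-c/\eta$, $s_2=c/\tau$, and the same extended class-$\mathcal{K}$ function $\alpha$ as in \eqref{eq:HOBC 3}.

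With $r=1$ the Lie-derivative expressions collapse: $L_f^r h+L_gL_f^{r-1}h\,u=\frac{\partial h}{\partial x}(f+gu)$, $L_f^{r-1}h=h$, and the level set $\mathcal{A}$ of Theorem~\ref{main_thrm} becomes $\{x:h(x)\ge c\}$, matching the corollary. Under the dictionary above, \eqref{eq:HOBC 1} for the single index $j=1$ is exactly \eqref{eq:r1 1}, \eqref{eq:HOBC 2} is exactly \eqref{eq:r1 2}, and \eqref{eq:HOBC 3} is exactly \eqref{eq:r1 3}; so the three hypotheses of the corollary supply precisely the first three families of hypotheses of Theorem~\ref{main_thrm}.

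It then remains to verify the ``polynomial'' families \eqref{eq:HOBC 4}--\eqref{eq:HOBC 5}, which for $r=1$ are one-line computations. From \eqref{eq:HOBC_ex} one gets $a_{1,0}=a_{0,0}+s_1\tau_1=c-(c/\eta)\eta=0$ and $a_{2,0}=a_{1,0}+s_2\tau_2=(c/\tau)\tau=c$, together with $\xi_1=t_1$, $\xi_2=t_1+\eta$, $\xi_3=t_1+\eta+\tau$. Condition \eqref{eq:HOBC 4} for $j=1$ reads $c-\frac{c}{\eta}(t-t_1)\ge 0$ on $[t_1,t_1+\eta]$ (true: the left-hand side is affine and nonnegative at both endpoints), for $j=2$ it reads $\frac{c}{\tau}(t-\xi_2)\ge 0$ on $[\xi_2,\xi_3]$ (true, since $t\ge\xi_2$ with $c\ge 0$, $\tau>0$), and \eqref{eq:HOBC 5} reads $a_{2,0}=c\ge c=a_{0,0}$, which holds with equality. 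Theorem~\ref{main_thrm} then yields $x_t\in\mathcal{C}$ for all $t\in[t_1,t_2]$ and $x_t\in\mathcal{A}$ for all $t\in[\xi_3,t_2]=[t_1+\eta+\tau,t_2]$, which is exactly the assertion.

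Since the corollary is a pure instantiation, I do not expect a genuine obstacle; the only points demanding care are the signs and scalings of the slacks---$s_1=-c/\eta<0$ because during the malicious phase $\dot h$ can only be bounded below, while $s_2=c/\tau>0$ because the recovered controller must drive $h$ upward on $\mathcal{C}\setminus\mathcal{A}$---and the reading of ``malicious for $\eta$ time'' as $Inv_u(corrupted)=\mathcal{U}$, so that \eqref{eq:HOBC 1} is required to hold against every admissible input. If one preferred a self-contained argument, the same result follows by rerunning the proof of Theorem~\ref{main_thrm} with $r=1$: integrate $\dot h\ge-c/\eta$ over $[t_1,t_1+\eta]$ to get $h(x_{t_1+\eta})\ge 0$; while $x_t\in\mathcal{C}\setminus\mathcal{A}$ use $\dot h\ge c/\tau$ to conclude that $h$ reaches the level $c$ within time $\tau$, so $x_t$ enters $\mathcal{A}$ by $t_1+\eta+\tau$; then apply the $r=1$ case of Lemma~\ref{FI_HO} with \eqref{eq:r1 3} to obtain forward invariance of $\mathcal{A}$ thereafter, with a standard continuity bootstrap ensuring $x_t$ never leaves $\mathcal{C}$ in the meantime.
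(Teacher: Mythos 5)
Your proposal is correct and follows essentially the same route as the paper: the authors likewise instantiate Theorem~\ref{main_thrm} with $r=1$, $k=2$, $l_1=corrupted$, $l_2=normal$, $c_0=c$, $\tau_1=\eta$, $\tau_2=\tau$, $s_1=-c/\eta$, $s_2=c/\tau$, and then observe that \eqref{eq:HOBC 4} and \eqref{eq:HOBC 5} reduce to the same trivial checks you carry out ($a_{1,0}=0$, $a_{2,0}=c\ge c$). Your added remarks on sign conventions and the self-contained $r=1$ re-derivation are consistent elaborations, not a different argument.
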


\begin{proof}
We can prove the corollary by substituting $r=1$, $k=2$, $l_1=corrupted,$ $l_2=normal$, $c_0=c$, $\tau_1=\eta$, $\tau_2=\tau$, $s_1=-\frac{c}{\eta}$ and $s_2=\frac{c}{\tau}$ into Theorem \ref{main_thrm}. In this case we note that \eqref{eq:r1 1}, \eqref{eq:r1 2} and \eqref{eq:r1 3} are directly obtained from \eqref{eq:HOBC 1}, \eqref{eq:HOBC 2} and \eqref{eq:HOBC 3}. Condition \eqref{eq:HOBC 4} becomes trivial for $j=1$ since $c -\frac{c}{\eta}(t-t_1) \geq 0,~~ \forall t \in [\xi_1,\xi_2]=[t_1,t_1+\eta]$. For $j=2$, condition \eqref{eq:HOBC 4} is again trivially satisfied since $a_{1,0}=0$ and $s_2 \geq 0$. Similarly, condition  \eqref{eq:HOBC 5} becomes trivial as it yields $c \geq c$. Thus conditions \eqref{eq:r1} are sufficient which completes the proof.
\end{proof}

The conditions presented in Theorem \ref{main_thrm} and Corollary \ref{coro:coro1} focus on stringent safety constraints. We remark that our approach can be readily extended to derive sufficient conditions for control policies under which the CPS is allowed to operate outside the safety set $\mathcal{C}$ at the expense of incurring cost \cite{niu2022analytical}. The cost is given by a non-decreasing function $J$ such that $J(h(x_t)) \geq 0$ if $h(x_t) \leq 0$ and zero otherwise. In this case the goal is to synthesize a control policy that satisfies a budget constraint $B$ on the total incurred cost $$\int_{t=t_1}^{t_2}J(h(x_t)~~dt \leq B.$$ We can achieve this by omitting the condition \eqref{eq:HOBC 4} in Theorem \ref{main_thrm} and replacing $\mathcal{C}$ with $\mathcal{D}= \{ x \in \mathcal{X} : h(x) \geq -d \}$ for some $d \geq 0$ in the conditions  \eqref{eq:HOBC 1} and  \eqref{eq:HOBC 2}. We then compute the upper bound on incurred cost by utilizing the fact that $$h(x_t) \geq  \sum_{i=1}^r \frac {a_{j-1,r-i} (t-\xi_j)^{r-i}}{(r-i)!}+ \frac{s_j (t-\xi_j)^r} {r!}$$ for all $j=1, \cdots ,k$ and for all $t \in [\xi_1,\xi_{k+1}]$. This bound can be encoded as a constraint given as
$$ \sum_{j=1}^k \int_{t=\xi_j}^{\xi_{j+1}} J(\sum_{i=1}^r \frac {a_{j-1,r-i} (t-\xi_j)^{r-i}}{(r-i)!}+ \frac{s_j (t-\xi_j)^r} {r!})~~dt\leq B$$
when synthesizing the control policy. Since for safety-critical CPS the safety requirements are stringent, we omit consideration of this soft constraint case in our development. 

In the following, we investigate how to synthesize the control policy $\mu$ and other corresponding parameters, e.g., $c_1, \cdots, c_r$, $\tau_1, \cdots, \tau_k$ to guarantee safety of the CPS. Our approach is to translate the inequality conditions given in Theorem \ref{main_thrm} into SOS constraints and then use SOS program. To do so, we will make following assumption.

\begin{assumption}\label{assump:semi-algebraic}
We assume that functions $f(x)$, $g(x)$, and $h(x)$ are polynomial in $x$. 
\end{assumption}

The assumption above allows us to derive SOS formulation of Theorem \ref{main_thrm}. The SOS formulation is given by the following result.

\begin{proposition}\label{prop:sos}
Suppose there exist parameters $c_0, c_1, \cdots ,c_{r-1} \geq 0$; $\tau_1, \tau_2, \cdots, \tau_k \geq 0$, and $s_1, s_2 \cdots, s_k$ such that
\begin{subequations}\label{eq:sos}
\begin{align}
    & L_f^r h(x)+L_g L_f^{r-1}h(x)u-s_j- q_j(x,u)h(x)\label{eq:sos 1} \nonumber \\
    &-\sum_{i=1}^m\Big(w_{j,i}(x,u)([u]_i-[u]_{l_j,i}^{min})+v_{j,i}([u]_{l_j,i}^{max}-[u]_i)\Big) \in \mathcal{S}(x,u), \forall j = 1, 2, \cdots, k-1 \\
    &L_f^r h(x)+L_g L_f^{r-1}h(x)\lambda(x)-s_k+ \sum_{i=0}^{r-1}p_i(x)(L_f^i h(x)- a_{0,i})- l(x)h(x) \in \mathcal{S}(x) \label{eq:sos 2}\\
    &L_f^r h(x)+L_g L_f^{r-1}h(x)\lambda(x)+\alpha(L_f^{r-1}(x)-a_{0,r-1})- \sum_{i=0}^{r-1}z_i(x)(L_f^i h(x)- a_{0,i}) \in \mathcal{S}(x) \label{eq:sos 3}\\
    &\lambda_i(x)-[u]_{l_k,i}^{min} \in \mathcal{S}(x) ,\quad [u]_{l_k,i}^{max}-\lambda_i(x) \in \mathcal{S}(x),~\forall i=1, 2, \cdots,m,
    \label{eq:sos 4}\\
    & \sum_{i=1}^r \frac {a_{j-1,r-i}(t-\xi_j)^{r-i}}{(r-i)!} + \frac{s_j(t-\xi_j)^r} {r!} -\phi_j(t)(t-\xi_j)+\psi_j(t)(t-\xi_{j+1}) \in \mathcal{S}(t),  ~~\forall j= 1, 2, \cdots, k \label{eq:sos 5}
\end{align}
\end{subequations}
and the following inequality holds:
\begin{equation}
a_{k,r-p} \geq a_{0,r-p}, ~~~~ \forall p=1,2, \dots ,r  \label{eq:sos_eqn}    
\end{equation}
where $a_{j,i}~~ \forall j=0, 1, \cdots, k$ $~~\forall i=0, 1, \cdots, r-1~~ $, and $\xi_j~~ \forall j= 0, 1, \cdots, k+1$ are given by \eqref{eq:HOBC_ex} and $l(x), p_i(x), z_i(x) \in \mathcal{S}(x) ~~~~  \forall i=0, 1, \dots ,r-1$; $\lambda_i(x)$ is a polynomial in $x$ for each $i=1,\ldots,m$; $q_j(x,u), w_{j,i}(x,u), v_{j,i}(x,u) \in \mathcal{S}(x,u), ~~~~ \forall i=1, 2, \ldots, m$ $~~\forall j= 1, 2, \cdots, k-1$ and $\phi_j(t),\psi_j(t) \in \mathcal{S}(t)$, $\forall j= 1, 2, \cdots, k$. Then the control policy $\mu(x)=\lambda(x)=[\lambda_1(x),\ldots,\lambda_m(x)]^\top$ satisfies the conditions in \eqref{eq:HOBC} for the parameters $c_0, c_1, \cdots ,c_{r-1}$; $\tau_1, \tau_2, \cdots, \tau_k$ and $s_1, s_2 \cdots, s_k$.
\end{proposition}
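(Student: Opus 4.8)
The plan is to show that each SOS membership in \eqref{eq:sos} is a certificate for the corresponding (possibly quantified) polynomial inequality in \eqref{eq:HOBC}, invoking the elementary fact that if $p(x) = \sigma_0(x) + \sum_i \sigma_i(x)\, g_i(x)$ with all $\sigma_i \in \mathcal{S}(x)$, then $p(x) \geq 0$ on the basic semialgebraic set $\{x : g_i(x) \geq 0 \ \forall i\}$ (this is Positivstellensatz-type reasoning; the converse direction, that such a representation always exists, is not needed here). Each of the five families of SOS constraints is handled independently, and the only bookkeeping is to match the description of the semialgebraic sets $\mathcal{C}$, $\mathcal{C}\setminus\mathcal{A}$, $\mathcal{A}$, $Inv_u(l_j)$, and the intervals $[\xi_j,\xi_{j+1}]$ to the multiplier terms appearing in \eqref{eq:sos}.

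First I would treat \eqref{eq:sos 1}: here the relevant set is $\mathcal{C}\times Inv_u(l_j) = \{(x,u): h(x)\geq 0,\ [u]_i \geq [u]_{l_j,i}^{min},\ [u]_{l_j,i}^{max}\geq [u]_i\ \forall i\}$, and the multipliers $q_j \in \mathcal{S}(x,u)$, $w_{j,i}, v_{j,i} \in \mathcal{S}(x,u)$ multiply exactly the defining polynomials of this box intersected with $\mathcal{C}$; hence membership of the displayed expression in $\mathcal{S}(x,u)$ forces $L_f^r h + L_g L_f^{r-1}h\, u - s_j \geq 0$ on $\mathcal{C}\times Inv_u(l_j)$, which is \eqref{eq:HOBC 1}. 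Next, for \eqref{eq:sos 4} I would note that $\lambda_i(x) - [u]_{l_k,i}^{min} \in \mathcal{S}(x)$ and $[u]_{l_k,i}^{max} - \lambda_i(x) \in \mathcal{S}(x)$ directly give $[\lambda(x)]_i \in [[u]_{l_k,i}^{min}, [u]_{l_k,i}^{max}]$ for all $x$, i.e. $\mu(x) = \lambda(x) \in Inv_u(l_k)$, so the control policy is admissible at $l_k$. For \eqref{eq:sos 2}, the set in \eqref{eq:HOBC 2} is $\mathcal{C}\setminus\mathcal{A}$; since $\mathcal{A} = \bigcap_{i=0}^{r-1}\{x : L_f^i h(x) \geq a_{0,i}\}$, a point lies in $\mathcal{C}\setminus\mathcal{A}$ iff $h(x)\geq 0$ and $L_f^i h(x) - a_{0,i} < 0$ for at least one $i$ — I would point out that the term $\sum_i p_i(x)(L_f^i h(x) - a_{0,i})$ with $p_i \in \mathcal{S}(x)$ is $\leq 0$ on $\mathcal{C}\setminus\mathcal{A}$ only when some factor is negative, but more carefully the standard argument is: on $\mathcal{C}\setminus\mathcal{A}$ there is $i$ with $L_f^i h(x) - a_{0,i}\le 0$, and actually the cleaner statement is that the SOS expression, being $\ge 0$ everywhere, yields $L_f^r h + L_g L_f^{r-1}h\,\lambda - s_k \ge -\sum_i p_i(L_f^i h - a_{0,i}) + l\,h \ge \text{(nonpositive)}$ on the relevant set — I would instead follow the paper's evident intent that this gives \eqref{eq:HOBC 2} on $\mathcal{C}\setminus\mathcal{A}$ by the same Positivstellensatz bookkeeping (noting $\mathcal{C}\setminus\mathcal{A} \subseteq \bigcup_i \{h\ge 0, L_f^i h - a_{0,i} \le 0\}$ and handling each piece). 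Similarly \eqref{eq:sos 3} certifies \eqref{eq:HOBC 3} over $\mathcal{A}$ using the multipliers $z_i \in \mathcal{S}(x)$ against $L_f^i h(x) - a_{0,i} \geq 0$, and \eqref{eq:sos 5} certifies \eqref{eq:HOBC 4} over $t\in[\xi_j,\xi_{j+1}]$ using $\phi_j, \psi_j \in \mathcal{S}(t)$ against $t - \xi_j \geq 0$ and $\xi_{j+1} - t \geq 0$. Finally, \eqref{eq:sos_eqn} is literally \eqref{eq:HOBC 5}, and the definitions \eqref{eq:HOBC_ex} are shared between the two formulations, so nothing is needed there.

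The main obstacle — and the step I would be most careful about — is \eqref{eq:sos 2}/\eqref{eq:HOBC 2}, because $\mathcal{C}\setminus\mathcal{A}$ is not a basic closed semialgebraic set (it is a union over the $r$ ways that $\mathcal{A}$ can fail), so the single SOS certificate with one multiplier $p_i$ per constraint does not tautologically imply nonnegativity on all of $\mathcal{C}\setminus\mathcal{A}$ in the naive one-line way; I would either argue that the paper is using $\{x: h(x)\ge 0\}$ as the outer set and the $p_i$-terms as a relaxation that still implies the inequality wherever at least one $L_f^i h - a_{0,i} \le 0$ (which is the regime the proof of Theorem~\ref{main_thrm} actually uses \eqref{eq:HOBC 2} in — see the integration argument where $x\in\mathcal{C}\setminus\mathcal{A}$), or more rigorously observe that on the closure $\overline{\mathcal{C}\setminus\mathcal{A}}$ one still only needs the inequality on each of the basic pieces $\{h\ge 0,\ a_{0,i}-L_f^i h \ge 0\}$ and the SOS term $-p_i(L_f^i h - a_{0,i}) = p_i(a_{0,i} - L_f^i h)$ is the right multiplier there. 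Modulo that single technical point, the proof is a direct verification: in each case the SOS containment, combined with the fact that an SOS polynomial is globally nonnegative, yields precisely the inequality in \eqref{eq:HOBC} on the stated domain, with $\mu(x) = \lambda(x)$ and the same parameters $c_0,\dots,c_{r-1}$, $\tau_1,\dots,\tau_k$, $s_1,\dots,s_k$.
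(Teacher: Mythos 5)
Your proposal follows the same route as the paper's own proof: each SOS containment in \eqref{eq:sos} is read as a Positivstellensatz-style certificate, combining the global nonnegativity of an SOS polynomial with the known signs of the multiplier terms on the relevant semialgebraic set. Your treatment of \eqref{eq:sos 1}, \eqref{eq:sos 3}, \eqref{eq:sos 4}, \eqref{eq:sos 5} and \eqref{eq:sos_eqn} coincides with the paper's (which writes out the arguments for \eqref{eq:sos 1} and \eqref{eq:sos 2} and dispatches the rest with ``in the same manner''), and those parts are correct.

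The point you isolate about \eqref{eq:sos 2} versus \eqref{eq:HOBC 2} is a genuine gap, and in fact the paper's proof commits precisely the error you are worried about: it asserts that $x\in\mathcal{C}\setminus\mathcal{A}$ implies $L_f^i h(x)-a_{0,i}\le 0$ for \emph{all} $i=0,\dots,r-1$, whereas $x\notin\mathcal{A}$ only guarantees this for \emph{at least one} $i$. The certificate \eqref{eq:sos 2}, as written, therefore yields \eqref{eq:HOBC 2} only on the basic closed set $\{x:h(x)\ge 0\}\cap\bigcap_{i=0}^{r-1}\{x:L_f^i h(x)\le a_{0,i}\}$, which for $r\ge 2$ is a strict subset of $\mathcal{C}\setminus\mathcal{A}$ (for $r=1$ the two sets agree up to a boundary set, so the relative-degree-one specialization is unaffected). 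Neither of your two proposed patches closes this with the single certificate given: on the piece $\{h\ge 0,\ a_{0,i}-L_f^i h\ge 0\}$ the remaining terms $-p_{i'}(x)\bigl(L_f^{i'}h(x)-a_{0,i'}\bigr)$ for $i'\ne i$ have indefinite sign, so piecewise nonnegativity does not follow; a correct repair needs $r$ separate SOS certificates, the $i$-th carrying only the multiplier for $a_{0,i}-L_f^i h$ (plus the $l(x)h(x)$ term), or a restatement of \eqref{eq:HOBC 2} over the smaller basic set actually certified. In short, your argument is at least as rigorous as the paper's, and the obstacle you flag is a real defect of the proposition's proof rather than of your write-up; it would have been stronger to state the correct fix ($r$ certificates) outright instead of hedging between two options, one of which does not work.
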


\begin{proof} 
First we will show that \eqref{eq:sos 1} implies \eqref{eq:HOBC 1}. Consider $x\in\mathcal{C}$ and $u \in Inv_u(l_j)$ which implies $[u]_{l_j,i}^{min}\leq [u]_i\leq [u]_{l_j,i}^{max}, \forall i=1,\ldots,m$ where $j=1, 2, \cdots, k-1$. Therefore we have that $h(x)\geq 0$, $[u]_i-[u]_{l_j,i}^{min}\geq 0$ and $[u]_{l_j,i}^{max}-[u]_i\geq 0$. Due to \eqref{eq:sos 1} and the condition that $q_j(x,u)$, $w_{j,i}(x,u)$, and $v_{j,i}(x,u)$ are SOS for all $i=1,\ldots, m$ and $j= 1, 2, \cdots, k-1$, we have that for all $(x,u)\in\mathcal{C}\times Inv_u({l_j})$ the following relation holds: 
\begin{multline*}
    L_f^r h(x)+L_g L_f^{r-1}h(x)u-s_j \geq  q_j(x,u)h(x)\nonumber\\
    + \sum_{i=1}^m\Big(w_{j,i}(x,u)([u]_i-[u]_{l_j,i}^{min})+v_{j,i}([u]_{l_j,i}^{max}-[u]_i)\Big) \geq 0,~~ \forall j= 1, 2, \cdots, k-1. 
\end{multline*}
Hence condition \eqref{eq:HOBC 1} holds. Now consider $x\in \mathcal{C} \backslash \mathcal{A}$. Then we have that $h(x)\geq 0$ and $(L_f^i h(x)- a_{0,i})=(L_f^i h(x)- c_i) \leq 0~~\forall i=0,1, \cdots , r-1$. Using \eqref{eq:sos 2} and the fact that $l(x,u)$ and $p_i(x,u)$ are SOS for all $i=1, 2, \ldots, m$, for all  $x\in \mathcal{C} \backslash \mathcal{A}$ we can write 
\begin{equation}
    L_f^r h(x)+L_g L_f^{r-1}h(x) \lambda(x) -s_k \geq l(x)h(x) - \sum_{i=0}^{r-1}p_i(x)(L_f^i h(x)- a_{0,i})  \geq 0. 
\end{equation}
Thus condition \eqref{eq:HOBC 2} holds as \eqref{eq:sos 4} ensures that $[u]_{l_k,i}^{min} \leq \lambda_i(x) \leq [u]_{l_k,i}^{max},~~ \forall i=1, 2, \ldots, m$, i.e., $\mu(x) = \lambda(x) \in Inv_u(l_k)$. In the same manner it can be shown that \eqref{eq:sos 3} and \eqref{eq:sos 5} imply \eqref{eq:HOBC 3} and \eqref{eq:HOBC 4}, respectively. Thus proof is completed.
\end{proof}

The above SOS formulation allows us to construct an algorithm for computing control policy and the other parameters as shown in Algorithm \ref{algo}. In Algorithm \ref{algo}, we search for $c_0, c_1, \cdots, c_{r-1}$ from $(0,0, \cdots, 0)$ to $(c_0^{max},c_1^{max}, \cdots, c_{r-1}^{max})$ where $c_0^{max},c_1^{max}, \cdots, c_{r-1}^{max}$ respectively denotes selected upper bound on $h(x),$ $h^1(x),\cdots, h^{r-1}(x)$ in the set $\mathcal{C}$. The order and step sizes in the update of these parameters are chosen appropriately. Next, for the selected values of $c_0, c_1, \cdots, c_{r-1}$, we compute $s_i$ for all $i= 1,2, \cdots, s_k$ and $\mu(x)$ so that the conditions \eqref{eq:sos 1}, \eqref{eq:sos 3} and \eqref{eq:sos 4} are satisfied. Then we search for $(\tau_1,\tau_2, \cdots, \tau_k)$ from $(\tau_1^{max},\tau_2^{max}, \cdots,$ $\tau_k^{min})$ to $(\tau_1^{min},\tau_2^{min}, \cdots,$ $\tau_k^{max})$ and check whether the condition \eqref{eq:sos_eqn} is satisfied. If it is satisfied, obtained $\mu(x)$ is the desired control policy. But if \eqref{eq:sos_eqn} is not satisfied for any choice of $(\tau_1,\tau_2, \cdots, \tau_k)$, then $c_0, c_1, \cdots, c_{r-1}$ are updated until a solution is found or $(c_0^{max},c_1^{max}, \cdots, c_{r-1}^{max})$ is reached. Since it is desired that system quickly returns to the level set $\mathcal{A}$, we initialize $\tau_k$ with $\tau_k^{min}$ unlike the other parameters $\tau_1, \cdots, \tau_{k-1}$.

\begin{center}
  	\begin{algorithm}[!htp]
  		\caption{Heuristic algorithm for computing $c_0, \cdots, c_{r-1}$; $\tau_1, \cdots, \tau_k$ and control policy $\mu(x)$}
  		\label{algo}
  		\begin{algorithmic}[1]
  			\State \textbf{Input}: $f(x); g(x); h(x); c_i^{max}~~ \forall i=0, 1, \cdots, r-1$; $\tau_j^{min}$ and $\tau_j^{max}~~ \forall j=1, 2,  \cdots, k$ 
  			\State \textbf{Output:} $c_i~~ \forall i=0, 1, \cdots, r-1$; $\tau_j ~~\forall j=1, 2,  \cdots, k$ and $\mu(x)$
  			\State \textbf{Initialization:} $c_i=0 ~~ \forall i=0, 1, \cdots, r-1$ 
  		    \Loop~~{sweep $(c_0,c_1, \cdots, c_{r-1})$ from $(0,0, \cdots, 0)$ to $(c_0^{max},c_1^{max}, \cdots, c_{r-1}^{max})$}
  	        \State Maximize $s_j$ subject to \eqref{eq:sos 1} $\forall j= 1, \cdots, k-1$ and maximize $s_k$ subject to \eqref{eq:sos 2}, \eqref{eq:sos 3}, \eqref{eq:sos 4}.
            \Loop~~{sweep $(\tau_1,\tau_2, \cdots, \tau_k)$ from $(\tau_1^{max},\tau_2^{max}, \cdots, \tau_k^{min})$ to $(\tau_1^{min},\tau_2^{min}, \cdots, \tau_k^{max})$}
            \State check \eqref{eq:sos 5} is feasible $\forall j= 1, 2, \cdots, k$ and whether \eqref{eq:sos_eqn} is satisfied
            \If{true}
            \State \textbf{return} $c_i~~ \forall i=0, 1, \cdots, r-1$; $\tau_j ~~\forall j=1, 2,  \cdots, k$ and  $\mu(x)=\lambda(x)$
            \EndIf
            \EndLoop
           \EndLoop
  		\end{algorithmic}
  	\end{algorithm}
\end{center}

In Algorithm \ref{algo} we assume that all the timing parameters (e.g., $\tau_1, \tau_2, \cdots, \tau_k$) are unknown. If any of these parameters is known, we use that value in the algorithm without varying the parameter. The proposed algorithm is flexible in the sense that it can be mapped to a CPS employing any of the CRAs to design control policy and associated timing parameters. Table \ref{table:parameters} provides a mapping to CRAs to solve various design problems inherent to the architectures. The last column in Table \ref{table:parameters} lists the timing parameters to be computed for specific CRAs using our algorithm. We note that in addition to these parameters, we also need to compute the level set parameters $c_0, c_1, \dots, c_{r-1}$. In this table we consider the worst-case scenario in the sense that if it is not known whether the controller is compromised or not, then we assume the controller is compromised. For example, in the case of YOLO the system can be at either $l=normal$ or $l=corrupted$ location prior to restart. Therefore we assume, only $l=corrupted$ and $l=restart$ for YOLO when designing the timing parameters to ensure that CPS remains safe in the worst-case scenario. This in fact is equivalent to letting $l_3=normal$ and $\tau_3=0$ during design. The same analysis is applied to dual redundant and proactive restart schemes as it can be seen from the table. Note that for the class of Simplex architecture, there is no associated timing parameter. In this case we only synthesize the control policy of the safety controller. For synthesizing the control policy we only consider the conditions \eqref{eq:sos 3} and \eqref{eq:sos 4}. We remark that Algorithm \ref{algo} can also be used for safety verification of a given control policy $\psi(x)$ by checking the feasibility of the conditions in Proposition \ref{prop:sos} for $\lambda(x)=\psi(x)$. If the conditions are feasible, then control policy $\psi(x)$ will guarantee safety.

\begin{table*}[!b] 
    \centering
\begin{tabular}{|l|l|l|l|l|}
\hline
CRAs & $k$ & \begin{tabular}[c]{@{}l@{}} Sequence of Locations\\ $l_1,l_2, \cdots, l_k$ \end{tabular} & \begin{tabular}[c]{@{}l@{}} Known\\ Parameters \end{tabular} & \begin{tabular}[c]{@{}l@{}}Design\\ Parameters \end{tabular} \\ \hline
\hline
BFT++ \cite{mertoguno2019physics}  & 3 &  $corrupted, restoration, normal$  & \begin{tabular}[c]{@{}l@{}} $\tau_1=N_1 \delta$, \\ $\tau_2=N_2 \delta$  \end{tabular}                  &   \begin{tabular}[c]{@{}l@{}} $\tau_3=N_3 \delta$, \\ $\mu(x)$  \end{tabular}               \\ \hline
\begin{tabular}[c]{@{}l@{}} YOLO \cite{arroyo2019yolo,arroyo2017fired} / \\ Dual Redundant \cite{gamarra2019dual}   \end{tabular}                  & 2 &$corrupted, restart/switching$ &                 & \begin{tabular}[c]{@{}l@{}} $\tau_1=N_4 \delta$, \\ $\tau_2=N_5 \delta$  \end{tabular}                                \\ \hline
Proactive Restart \cite{abdi2018guaranteed,romagnoli2019design} & 3& $corrupted, restart,SC$  &                  $\tau_2=N_7 \delta$   & \begin{tabular}[c]{@{}l@{}l@{}} $\tau_1=N_6 \delta$, \\ $\tau_2=N_8 \delta$, \\ $\mu(x)$ \end{tabular} \\\hline
Reactive Restart  \cite{niu2022verifying} & 3 &$corrupted, restart, normal$ & &  \begin{tabular}[c]{@{}l@{}l@{}l@{}} $\tau_1=N_9 \delta$, \\ $\tau_2=N_{10}\delta$, \\ $\tau_3=N_{11}\delta,$ \\$\mu(x)$  \end{tabular}               \\ \hline
Simplex/S3A \cite{sha2001using,bak2009system,mohan2013s3a}      &  1& $SC$ &   &   $\mu(x)$               \\ \hline
\end{tabular}
 \caption{Computation of design parameters in different resilient architectures via mapping to the Algorithm \ref{algo}. For the resilient architectures in the first column, the corresponding numbers of discrete locations $k$ and the corresponding sequences of discrete locations in each run are listed in the second and third column, respectively. The known parameters and the parameters to be designed in these architectures using the proposed algorithm are listed in the fourth and fifth column, respectively.}
    \label{table:parameters}
\end{table*}

Now we will characterize the convergence of our algorithm. To do so, first we provide a justification of Line 5 of our algorithm where we maximize $s_j$ subject to \eqref{eq:sos 1}, \eqref{eq:sos 2}, \eqref{eq:sos 3}  and \eqref{eq:sos 4} for $j = 1, 2, \cdots, k$. We show that maximizing $s_j$ will suffice to find a solution to Proposition $\ref{prop:sos}$ if any solution exists. The result is formalized as below.

\begin{lemma} \label{lemma_max}
Suppose $s_i^{max}$ maximizes $s_j$ subject to \eqref{eq:sos 1} for $j=1, 2, \cdots k-1$ and $s_k^{max}$ maximizes $s_k$ subject to \eqref{eq:sos 2}, \eqref{eq:sos 3} and \eqref{eq:sos 4} for the selected $c_0, c_1, \cdots c_{r-1}$. If \eqref{eq:sos 5} and \eqref{eq:sos_eqn} are not satisfied by $s_1^{max}, s_2^{max}, \cdots, s_k^{max}$ and the selected $\tau_1, \tau_2, \cdots, \tau_k$, then there exits no solution to the conditions given by Proposition $\ref{prop:sos}$ for the selected $c_0, c_1, \cdots c_{r-1}$ and $\tau_1, \tau_2, \cdots, \tau_k$.
\end{lemma}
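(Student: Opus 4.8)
The plan is to argue by monotonicity: the constraints \eqref{eq:sos 5} and \eqref{eq:sos_eqn} that remain to be checked after $c_0,\dots,c_{r-1}$ and $\tau_1,\dots,\tau_k$ are fixed depend on the decision variables $s_1,\dots,s_k$ only through the quantities $a_{j,i}$ (and directly through $s_j$ in \eqref{eq:sos 5}), and each of these is \emph{nondecreasing} in every $s_j$. Hence if the maximal feasible values $s_1^{\max},\dots,s_k^{\max}$ fail these inequalities, no feasible choice can satisfy them.

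First I would make precise that, once $c_0,\dots,c_{r-1}$ and $\tau_1,\dots,\tau_k$ are held fixed, the SOS conditions \eqref{eq:sos 1} (for $j=1,\dots,k-1$) and \eqref{eq:sos 2}--\eqref{eq:sos 4} (for $j=k$) decouple across $j$: the feasibility of \eqref{eq:sos 1} for a given $j$ constrains only $s_j$ together with its own multipliers $q_j,w_{j,i},v_{j,i}$, and similarly \eqref{eq:sos 2}--\eqref{eq:sos 4} constrain only $s_k$, $\lambda(x)$, and the multipliers $l,p_i,z_i,\lambda_i$. Therefore the set of jointly feasible tuples $(s_1,\dots,s_k)$ for these conditions is a product of intervals, and moreover each such interval is downward closed: if some $s_j$ is feasible with a multiplier certificate, then any $s_j' \le s_j$ is feasible with the \emph{same} certificate, because lowering $s_j$ only adds the nonnegative constant $s_j - s_j'$ to a polynomial already shown to be SOS, keeping it SOS. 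So a tuple $(s_1,\dots,s_k)$ satisfies \eqref{eq:sos 1}--\eqref{eq:sos 4} iff $s_j \le s_j^{\max}$ for all $j$.

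Next I would establish the monotonicity of the remaining conditions in $(s_1,\dots,s_k)$. From the recursion \eqref{eq:HOBC_ex 2}, each $a_{j,r-p}$ is a sum of terms $\frac{a_{j-1,r-i}\tau_j^{p-i}}{(p-i)!}$ and $\frac{s_j\tau_j^p}{p!}$ with $\tau_j \ge 0$; by induction on $j$, $a_{j,r-p}$ is a nonnegative-coefficient (hence nondecreasing) function of $s_1,\dots,s_j$. Consequently the left-hand side of \eqref{eq:sos_eqn}, namely $a_{k,r-p}$, is nondecreasing in every $s_i$, so if $a_{k,r-p} < a_{0,r-p}$ at $(s_1^{\max},\dots,s_k^{\max})$ it is violated for every feasible tuple. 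For \eqref{eq:sos 5}, on the interval $t\in[\xi_j,\xi_{j+1}]$ one has $t-\xi_j \ge 0$ and the polynomial $\sum_{i=1}^r \frac{a_{j-1,r-i}(t-\xi_j)^{r-i}}{(r-i)!} + \frac{s_j(t-\xi_j)^r}{r!}$ is, coefficient-wise in the monomials $(t-\xi_j)^{r-i}$, nondecreasing in $s_1,\dots,s_j$; I would then invoke the standard fact (used implicitly already in the paper, cf.\ the Positivstellensatz-style certificates in Proposition~\ref{prop:sos}) that a univariate polynomial nonnegative on $[\xi_j,\xi_{j+1}]$ admits such an SOS-with-multipliers certificate, so feasibility of \eqref{eq:sos 5} is equivalent to pointwise nonnegativity on that interval — which is monotone in $s$. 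Hence if \eqref{eq:sos 5} fails at the maximal values, it fails everywhere feasible.

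Putting these together: any solution to Proposition~\ref{prop:sos} for the selected $c_0,\dots,c_{r-1},\tau_1,\dots,\tau_k$ must in particular satisfy \eqref{eq:sos 1}--\eqref{eq:sos 4}, forcing $s_j \le s_j^{\max}$ for all $j$; by the monotonicity just shown, its values of \eqref{eq:sos 5} and \eqref{eq:sos_eqn} are dominated by those at $(s_1^{\max},\dots,s_k^{\max})$, which are assumed to fail; so no such solution exists. The main obstacle I anticipate is the care needed in the \eqref{eq:sos 5} step: one must either invoke the exactness of the chosen SOS relaxation on a line segment (true for univariate polynomials, but worth stating cleanly), or argue monotonicity at the level of the relaxation itself by exhibiting how a feasible multiplier pair $(\phi_j,\psi_j)$ for a larger $s_j$ is adapted to a smaller one — the latter is the cleaner route and mirrors the downward-closedness argument used for \eqref{eq:sos 1}--\eqref{eq:sos 4}.
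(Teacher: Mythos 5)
Your proposal is correct and follows essentially the same route as the paper's proof: use maximality to conclude any feasible $\tilde{s}_j \le s_j^{\max}$, use the nonnegative-coefficient recursion \eqref{eq:HOBC_ex 2} to get monotonicity of \eqref{eq:HOBC 4}--\eqref{eq:HOBC 5} in each $s_j$, and invoke the exactness of univariate SOS representations to pass between the pointwise inequalities and \eqref{eq:sos 5}, \eqref{eq:sos_eqn}. The extra structure you establish (downward-closedness of the feasible set for \eqref{eq:sos 1}--\eqref{eq:sos 4} as a product of intervals) is not needed but is harmless, and your only slip is directional in the aside on the alternative certificate-adaptation route: for \eqref{eq:sos 5} one must transfer a certificate from a \emph{smaller} $s_j$ to a \emph{larger} one, the opposite of the downward transfer used for \eqref{eq:sos 1}--\eqref{eq:sos 4}.
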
 

\begin{proof}
Suppose there exist $\tilde{s}_1, \tilde{s}_2, \cdots, \tilde{s}_k$ such that conditions given by Proposition $\ref{prop:sos}$ are satisfied for the selected $c_0, \cdots, c_{r-1}$ and  $\tau_1, \tau_2, \cdots, \tau_k$. Since $s_j^{max}$ maximizes $s_j$ subject to \eqref{eq:sos 1} for all $j=1, 2, \cdots k-1$ and $s_k^{max}$ maximizes $s_k$ subject to \eqref{eq:sos 2}, \eqref{eq:sos 3} and \eqref{eq:sos 4} for the selected $c_0, c_1, \cdots c_{r-1}$, therefore $\tilde{s}_j \leq s_j^{max}$ for all $j=1, 2, \cdots, k$. Since according to \eqref{eq:HOBC_ex 2} each $a_{j,i}$ is a non-decreasing function of each $s_j$, the expressions in the left hand sides of \eqref{eq:HOBC 4} and \eqref{eq:HOBC 5} are also non-decreasing functions of each $s_j$ where $j=1, 2, \cdots, k$ and $i=0,1, \cdots, r-1$. Hence $s_1^{max}, s_2^{max}, \cdots, s_k^{max}$ satisfy \eqref{eq:HOBC 4} and \eqref{eq:HOBC 5}. Since for univariate polynomial non-negativity and SOS conditions are equivalent \cite{powers2011positive}, therefore  $s_1^{max}, s_2^{max}, \cdots, s_k^{max}$ also satisfy \eqref{eq:sos 5} and \eqref{eq:sos_eqn} which contradicts the assumption. Hence proof is complete.
\end{proof}

Now we present our main result on the convergence of Algorithm \ref{algo}. Our insight is that if there exists any solution that satisfies the conditions in Proposition \ref{prop:sos} strictly in the sense that SOS are nonzero in \eqref{eq:sos} and the inequality \eqref{eq:sos_eqn} is strict, then the solution lies within the interior of a feasible solution set. In that case, by choosing sufficiently small step size for the update of $c_0, c_1, \cdots c_{r-1}$ and $\tau_1, \tau_2, \cdots, \tau_k$, convergence of the algorithm to a feasible solution can be guaranteed. The following proposition formalizes this.
  
\begin{proposition}
Suppose there exits a solution $(\hat{c}_0, \hat{c}_1,\cdots,$ $\hat{c}_{r-1}, \hat{\tau}_1, \hat{\tau}_2, \cdots,$ $\hat{\tau}_k)$ such that \eqref{eq:sos} and \eqref{eq:sos_eqn} are satisfied strictly 
with $0 \leq c_i \leq c_i^{max}~~\forall i=0, 1, \cdots, r-1$ and $\tau_j^{min} \leq \tau_j \leq \tau_j^{max}~~ \forall j=1, 2, \cdots, k$. Then Algorithm \ref{algo} finds a feasible solution with $0 \leq c_i \leq c_i^{max}~~\forall i=0, 1, \cdots, r-1$ and $\tau_j^{min} \leq \tau_j \leq \tau_j^{max}~~ \forall j=1, 2, \cdots, k$ in finite number of iterations if step sizes for updating $c_0, c_1, \cdots, c_{r-1}, \tau_1,\tau_2, \cdots$ and $\tau_k$ are chosen appropriately small.
\end{proposition}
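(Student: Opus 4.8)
The plan is to exploit continuity of the SOS feasibility conditions with respect to the parameters $c_0,\dots,c_{r-1},\tau_1,\dots,\tau_k$, together with Lemma~\ref{lemma_max}, to argue that a strictly feasible solution has a whole neighborhood of feasible points, and that the grid traversal in Algorithm~\ref{algo} must land in that neighborhood once the step sizes are small enough. First I would fix the strictly feasible point $(\hat c_0,\dots,\hat c_{r-1},\hat\tau_1,\dots,\hat\tau_k)$ guaranteed by hypothesis, with its accompanying multipliers $l(x),p_i(x),z_i(x),q_j,w_{j,i},v_{j,i},\phi_j,\psi_j$ and policy $\lambda(x)$; by ``strict'' I mean each membership in \eqref{eq:sos} holds with the SOS polynomial bounded below by $\varepsilon\|(\cdot)\|^2$ (or simply being positive definite on the relevant compact set) and \eqref{eq:sos_eqn} holds with a positive gap. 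Note from \eqref{eq:HOBC_ex} that each $a_{j,i}$, and hence each $\xi_j$, is a polynomial — in particular a continuous — function of $(c_0,\dots,c_{r-1},\tau_1,\dots,\tau_k,s_1,\dots,s_k)$. Likewise, the left-hand sides of \eqref{eq:sos 1}--\eqref{eq:sos 5} and of \eqref{eq:sos_eqn} depend continuously (indeed polynomially) on these parameters once the multipliers and $\lambda(x)$ are held fixed.

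Next I would run the argument in two stages, matching the nested loops of the algorithm. Stage one: by Line 5 of Algorithm~\ref{algo}, for any admissible $(c_0,\dots,c_{r-1})$ the algorithm selects $s_j=s_j^{max}$, the maximizer of \eqref{eq:sos 1} (resp.\ \eqref{eq:sos 2}--\eqref{eq:sos 4}); in particular $s_j^{max}\geq\hat s_j$ whenever the $\hat s_j$ are feasible for those same constraints. By continuity of the feasibility region of \eqref{eq:sos 1}--\eqref{eq:sos 4} in $(c_0,\dots,c_{r-1})$ and strictness at $\hat c$, there is $\rho>0$ such that for every $(c_0,\dots,c_{r-1})$ within $\rho$ of $\hat c$ the constraints \eqref{eq:sos 1}--\eqref{eq:sos 4} remain feasible (reuse the same multipliers and $\lambda$, perturbing $s_j$ slightly), so the algorithm obtains $s_j^{max}$ at least as large as a perturbed $\hat s_j$. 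Stage two: by Lemma~\ref{lemma_max}, once $s_j=s_j^{max}$, the remaining conditions \eqref{eq:sos 5} and \eqref{eq:sos_eqn} are the only ones left to check, and it suffices that \emph{some} feasible $s$-vector satisfies them — because $a_{j,i}$ is non-decreasing in each $s_j$, the maximizers do too. Using the positive gap in \eqref{eq:sos_eqn} and the strict positivity of the univariate SOS polynomials in \eqref{eq:sos 5} at $\hat\tau$, continuity gives $\rho'>0$ such that for all $(c,\tau)$ within $\rho'$ of $(\hat c,\hat\tau)$ (and with $s_j$ equal to the corresponding maximizers, which by Stage one dominate a feasible perturbation of $\hat s$) the conditions \eqref{eq:sos 5} and \eqref{eq:sos_eqn} still hold. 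Finally, since the sweeps in Lines 4 and 6 march through $[0,c_0^{max}]\times\cdots$ and $[\tau_1^{min},\tau_1^{max}]\times\cdots$ on a uniform grid, choosing every step size smaller than $\min(\rho,\rho')/\sqrt{r+k}$ guarantees that at least one grid point falls inside the ball of radius $\min(\rho,\rho')$ around $(\hat c,\hat\tau)$; at that grid point Line 7 succeeds and the algorithm returns. The number of grid points is finite, so termination occurs in finitely many iterations.

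The main obstacle I anticipate is making the ``continuity of SOS feasibility'' step rigorous: strictly speaking one must argue that the fixed multipliers $l,p_i,z_i,q_j,w_{j,i},v_{j,i}$ and the policy $\lambda$ certifying strict feasibility at $\hat c$ can be \emph{reused unchanged} at nearby parameter values, with only $s_j$ (and the outermost SOS slack) absorbing the perturbation. This works because the perturbation enters the left-hand sides of \eqref{eq:sos} only through the affine terms $-s_j$, $-p_i(x)(L_f^i h(x)-a_{0,i})$, $-z_i(x)(\cdots)$, and through $\xi_j$ and $a_{j,i}$ in \eqref{eq:sos 5}, each of which changes by $O(\rho)$ uniformly on the relevant compact domain; a strictly-SOS polynomial (positive definite on a compact set, hence with a positive minimum) stays SOS under an $O(\rho)$ perturbation for $\rho$ small, since the set of SOS polynomials is closed and the interior point remains in the interior. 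A secondary subtlety is the interaction between Lemma~\ref{lemma_max} and the monotonicity in $s_j$: one should note explicitly that raising $s_j$ to $s_j^{max}$ can only help \eqref{eq:sos 5} and \eqref{eq:sos_eqn}, so it is legitimate to verify those conditions at the strictly-feasible $\hat s$ and then transfer the conclusion to $s_j^{max}$. Apart from these points, the argument is a standard compactness-plus-continuity packing of a strictly feasible point by grid points.
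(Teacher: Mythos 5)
Your proposal is correct and follows essentially the same route as the paper's proof: strict feasibility of $(\hat c,\hat\tau)$ yields a feasibility region of non-zero measure, step sizes smaller than that region's extent in each coordinate force the grid sweep to land inside it, and Lemma~\ref{lemma_max} (via monotonicity of $a_{j,i}$ in $s_j$) guarantees the algorithm's maximized $s_j^{max}$ certify feasibility at that grid point. The only difference is one of rigor: the paper simply asserts the existence of the feasible interval $\mathcal{I}$, whereas you supply the continuity/perturbation argument (reusing the multipliers and absorbing $O(\rho)$ perturbations into the strictly positive SOS slack) that justifies it.
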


\begin{proof}
Since the solution $(\hat{c}_0, \hat{c}_1,\cdots,$ $\hat{c}_{r-1}, \hat{\tau}_1, \hat{\tau}_2, \cdots,$ $\hat{\tau}_k)$ satisfies \eqref{eq:sos} and \eqref{eq:sos_eqn} strictly, therefore there exists a solution interval $\mathcal{I}\in \mathbb{R}^{r+k}$ with non-zero measure for which \eqref{eq:sos} and \eqref{eq:sos_eqn} are feasible where $0 \leq c_i \leq c_i^{max}~~\forall i=0, 1, \cdots, r-1$ and $\tau_j^{min} \leq \tau_j \leq \tau_j^{max}~~ \forall j=1, 2, \cdots, k$. Denote the lengths of $\mathcal{I}$ in $c_i$ and $\tau_j$ are respectively $\tilde{c}_i$ and $\tilde{\tau}_j$  where $i=0, 1, \cdots , r-1$ and $j=1, 2, \cdots, k$. Let the update for $c_i$ in the Line 4 of Algorithm \ref{algo} be $\epsilon_{c_i} \in(0, \tilde{c}_i)$ and the update of $\tau_j$ in the Line 6 of Algorithm \ref{algo} be $\epsilon_{\tau_j} \in (0, \tilde{\tau_j})$ for all $i=0, 1, \cdots , r-1$ and for all $j=1, 2, \cdots, k$. Then applying Lemma  \ref{lemma_max} we get that  Algorithm \ref{algo} will terminate with a feasible solution. Otherwise the interval $\mathcal{I}$ contains some infeasible solutions to Eqn. \eqref{eq:sos} and \eqref{eq:sos_eqn}, which contradicts its definition.
\end{proof}

According to the proof, the time complexity of our algorithm is $\big(\Pi_{i=0}^{r-1} \frac{c_i^{max}}{\epsilon_{c_i}}\big)$ $\big(\Pi_{j=1}^{k} \frac{\tau_j^{max}-\tau_j^{min}} {\epsilon_{\tau_j}}\big)$ $M$ where $M$ is the average time required for each iteration. In practice, time complexity is less as some of the timing parameters are known.

\section{Case Study} \label{sec:sim}

This section presents a case study as a verification of our proposed framework. We consider two vehicles including a leading vehicle and a trailing vehicle. The vehicles are modeled as point mass and are assumed to be moving along a straight road. The trailing vehicle is equipped with an adaptive cruise control (ACC).

We denote the velocities of the leading and trailing vehicles as $v_l$ and $v_f$, respectively.
The distance between the vehicles is denoted as $D$. Let $x=[v_l,v_f,D]^T$ be the state variable. Then the vehicles jointly follow the dynamics given below \cite{ames2016control}:
\begin{equation}\label{eq:sim dynamics}
    \dot{x}\doteq\begin{bmatrix}
    \dot{v}_l\\
    \dot{v}_f\\
    \dot{D}
    \end{bmatrix}=\begin{bmatrix}
    a_l\\
    -\frac{F_r}{m}\\
    v_l-v_f
    \end{bmatrix}+\begin{bmatrix}
    0\\
    \frac{1}{m}\\
    0
    \end{bmatrix}u
\end{equation}
where $u\in[-1,1]$ is the control input to the trailing vehicle, $m=1$ is the mass of the trailing vehicle, $a_l=0.3$ is the acceleration of the leading vehicle, and $F_r=f_0+f_1v_f+f_2v_f^2$ models the resistance incurred by the trailing vehicle. In this case study, we choose $f_0=0$, $f_1=1$, and $f_2=0.5$ and note that the dynamics \eqref{eq:sim dynamics} is nonlinear. We assume that $x\in\mathcal{X}$, where $\mathcal{X}$ is a compact set given as $\mathcal{X}=\{x:\|x\|^2\leq d\}$ with $d=10$. We let the initial system state be $x_0=[0,0,3]^T$. We suppose that the control input is updated every $0.1$ second, i.e., with frequency $10$Hz.

The trailing vehicle is required to satisfy a safety constraint modeled as $x_t\in\mathcal{C}$ for all $t\geq 0$, where 
\begin{equation*}
    \mathcal{C} = \{x:h(x)=D-2\geq 0\}.
\end{equation*}
That is, the distance between the leading and trailing vehicles is no less than $2$, and the trailing vehicle should stay behind the leading vehicle. Note that the system is of relative degree two (i.e. $r=2$) under the safety constraint since $L_gh(x)=0$ and $L_fL_gh(x)=-\frac{1}{m}$. Therefore control synthesis algorithm given by \cite{niu2022analytical} is not applicable here.

In the remainder of this section, we evaluate a collection of architectures including BFT++ \cite{mertoguno2019physics}, YOLO \cite{arroyo2019yolo,arroyo2017fired}, proactive restart \cite{abdi2018guaranteed,romagnoli2019design}, reactive restart \cite{niu2022verifying}, dual redundant \cite{gamarra2019dual} and Simplex \cite{sha2001using,bak2009system,mohan2013s3a} architectures under our proposed framework. We compute the set $\mathcal{A}$ along with the control policy $\mu$ and corresponding timing parameters as shown in Table \ref{table:parameters}. We also provide a comparison among these architectures based on our case study.

\subsection{Evaluation of BFT++}
In what follows, we evaluate BFT++ using our proposed framework on the system \eqref{eq:sim dynamics}. We consider that the trailing vehicle is subject to a cyber attack. The attack exploits the vulnerability of the trailing vehicle's controller, which will trigger a controller crash. We consider the worst-case time consumption for controller crash and restoration which are $N_1=2$ and $N_2=2$ epochs, respectively \cite{mertoguno2019physics}. We assume that the leading vehicle is equipped with a memory of length $2$, which stores the delayed control inputs, so as to provide input signal to the vehicle during controller restoration. 

By Table \ref{table:parameters}, the hybrid system traverses $corrupted, restoration$ and $normal$ statuses during one attack cycle. Under this setup, we have that during one attack cycle, the controller will take $4$ epochs to return to the normal status following an exploitation by the attacker. Using Algorithm \ref{algo}, we obtain that $c_0=0.8$ and $c_1= 0.1$. In addition, the controller should remain in $normal$ status for at least $\tau_3=4$ epochs. We simulate the distance between the leading and trailing vehicles in Fig. \ref{fig:BFT} by considering the worst-case scenario where the input signal during $corrupted$ and $restoration$ statuses are both malicious. The trajectory generated by corrupted and delayed inputs is shown in red color, and the trajectory during the normal status is plotted in green color. Note that to guarantee safety of the system for possible future attack, the synthesized controller should ensure that $D\geq 2.8$ at the end of the normal status (as shown at epoch 8 and 16 in Fig. \ref{fig:BFT}). In Fig. \ref{fig:BFT}, we observe that the adversary attempts to cause crash between the leading and trailing vehicles by injecting compromised inputs to decrease the distance between them (the red portion in Fig. \ref{fig:BFT}). After the controller is restored, the designed control policy steers the trailing vehicle to set $\mathcal{A}$, as shown by the green portion in Fig. \ref{fig:BFT}. It should be noted that because of Newton's law of motion there is a latency between distance $D$ and the control input $u$ in the dynamics \eqref{eq:sim dynamics}. Due to the latency the distance continues to decrease under synthesized policy in the first green segment of Fig. \ref{fig:BFT}.


\subsection{Evaluation of YOLO}
This subsection evaluates YOLO \cite{arroyo2019yolo,arroyo2017fired}. In this case, there is only one controller that turns on and off periodically. During one attack cycle, the hybrid system visits  $corrupted, restart$ and $normal$ statuses. We remark that in this case the trailing vehicle is not equipped with any mechanism to detect whether the system is compromised by the adversary or not. Therefore in this case study, we will consider the worst-case scenario and assume that the controller produces compromised inputs when it is online. Thus we will consider only $corrupted$ and $restart$ statuses for our design.

Using Algorithm \ref{algo}, we have that $c_0=0.2$ and $c_1=0.1$. In addition, we obtain that $\tau_1=\tau_2=5$ epochs. In other words, during one attack cycle, the controller is switched on and off every $5$ epochs, yielding controller availability to be $50\%$. We plot the distance between the two vehicles in Fig. \ref{fig:YOLO}. We use red and blue color to represent the on ($corrupted$) and off ($restart$) statuses of the controller. We note that the adversary aims at causing vehicle crash by decreasing distance $D$, as shown by the red portion in Fig. \ref{fig:YOLO}. However, the potential crash can be prevented by restarting the controller (the blue portion in Fig. \ref{fig:YOLO}). During the restart phase, the trailing vehicle slows down due to resistance, and thus maintains the distance to be within the safety set $\mathcal{C}$.

\subsection{Evaluation of Proactive Restart}
In what follows, we evaluate the proactive restart based approach \cite{abdi2018guaranteed,romagnoli2019design}. We consider that the trailing vehicle is equipped with two controllers, one being main controller which is vulnerable to the cyber attack and the other one being the safety controller. After restart, the safety controller takes over the system to guarantee safety. To consider the worst-case scenerio, we assume that the main controller becomes compromised when it is online. However, safety controller does not become compromised since all the external interfaces are disabled during this interval \cite{abdi2018guaranteed,romagnoli2019design}.  Hence in this case, the system visits $corrupted, restart$ and $SC$ statuses during one attack cycle. 

Using Algorithm \ref{algo}, we have that $c_0=0.83$ and $c_1=0.1$. Assuming that  $\tau_2=1$ epoch, we also obtain the timing parameters as $\tau_1 = 3$ epochs and $\tau_3=1$ epoch. That is, the maximum number of epochs spent in $corrupted$ and $restart$ statuses are $3$ and $1$ epochs, respectively. The minimum number epochs in  $SC$ status is $1$ epoch. We simulate the trajectory in Fig. \ref{fig:proactive}. We observe that the adversary can shorten the distance between the leading and trailing vehicles. The proactive restart commands sent by the decision module at epoch 3, 8, 13, and 18 prevent the adversary from further decreasing the distance between the vehicles. Furthermore, during the $restart$, since the leading vehicle's velocity is still less than that of the trailing vehicle, the distance between two vehicles continues to decrease, as shown by the blue fragment in Fig. \ref{fig:proactive}. During $SC$, the safety controller is implemented that uses the control policy synthesized according to our algorithm. This is shown by the green portion of the system trajectory in Fig. \ref{fig:proactive} where we observe that the distance between the vehicles starts to increase so as to maintain the safety constraint and prevent safety violations in the future.

\subsection{Evaluation of Reactive Restart}
This subsection evaluates the reactive restart based approach \cite{niu2022verifying}. In this setting, the trailing vehicle is equipped with one controller. Different from proactive restart based approach, restart command is issued as a response to adversarial exploitation, which triggers controller crash. In this case, the hybrid system visits $corrupted, restart$ and $normal$ statuses during one attack cycle as shown in the Table \ref{table:parameters}.

Using Algorithm \ref{algo}, we obtain that $c_0=0.82$ and $c_1=0.1$. We also have that the controller tolerates $\tau_1=3$, $\tau_2=1$, and $\tau_3=2$ epochs in $corrupted$, $restart$, and $normal$ statuses during one attack cycle. We simulate the distance between the vehicles in Fig. \ref{fig:reactive}. We observe that the adversary first reduces the distance between the two vehicles by accelerating the trailing vehicle, as shown by the red color fragment in Fig. \ref{fig:reactive}. Then the adversarial exploitation triggers controller restart and leads to input $u=0$ (shown by the blue fragment in Fig. \ref{fig:reactive}). After completing controller restart, the desired control input as computed using our algorithm is provided to the trailing vehicle, which increases the distance with the leading vehicle as seen in the green color fragment of Fig. \ref{fig:reactive}.

\begin{figure*}[t!]
\centering
                 \begin{subfigure}{.33\columnwidth} \includegraphics[width=\columnwidth]{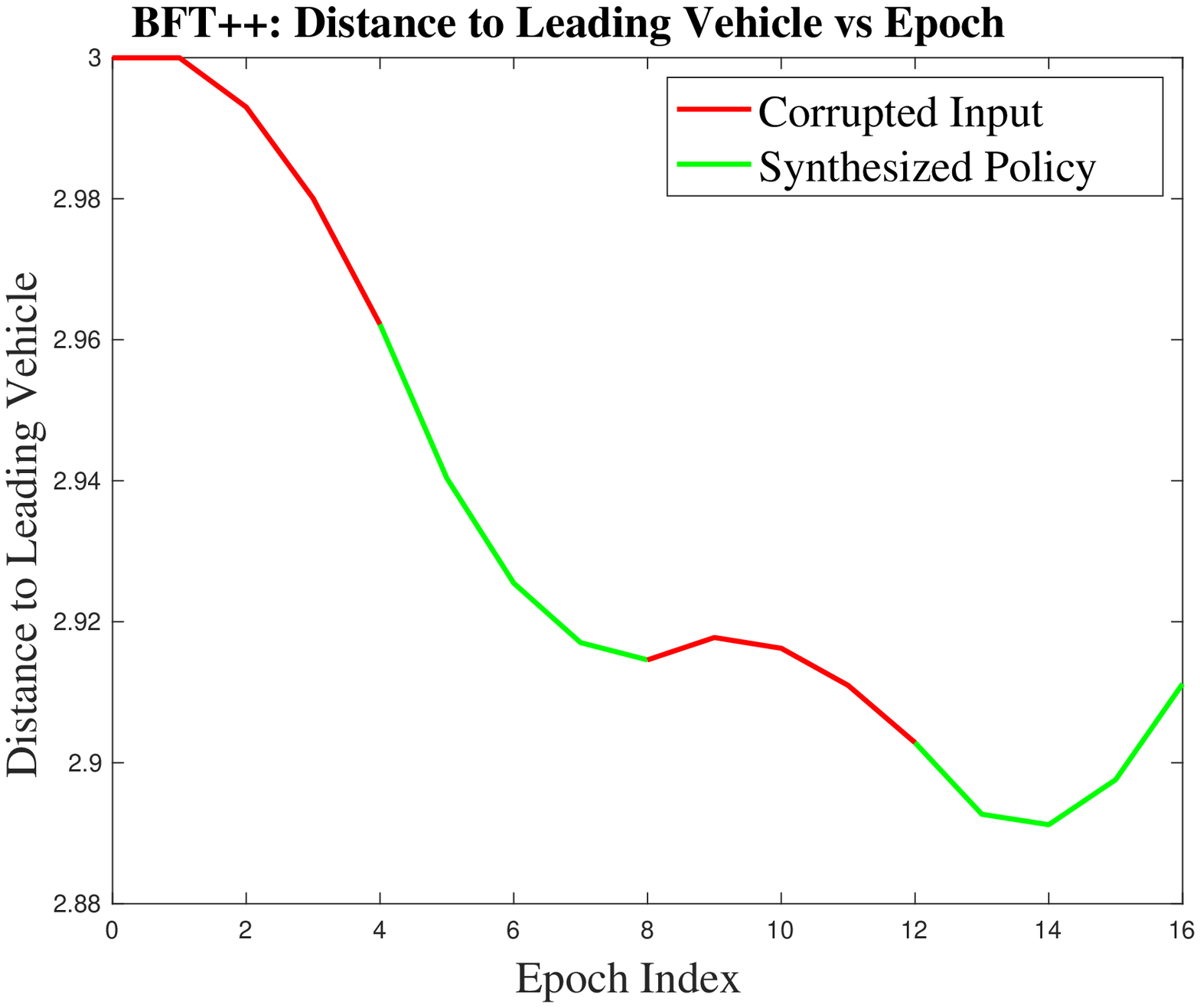}
                 \subcaption {}
                 \label{fig:BFT}
                 \end{subfigure}\hfill
                 \begin{subfigure}{.33\columnwidth} \includegraphics[width=\columnwidth]{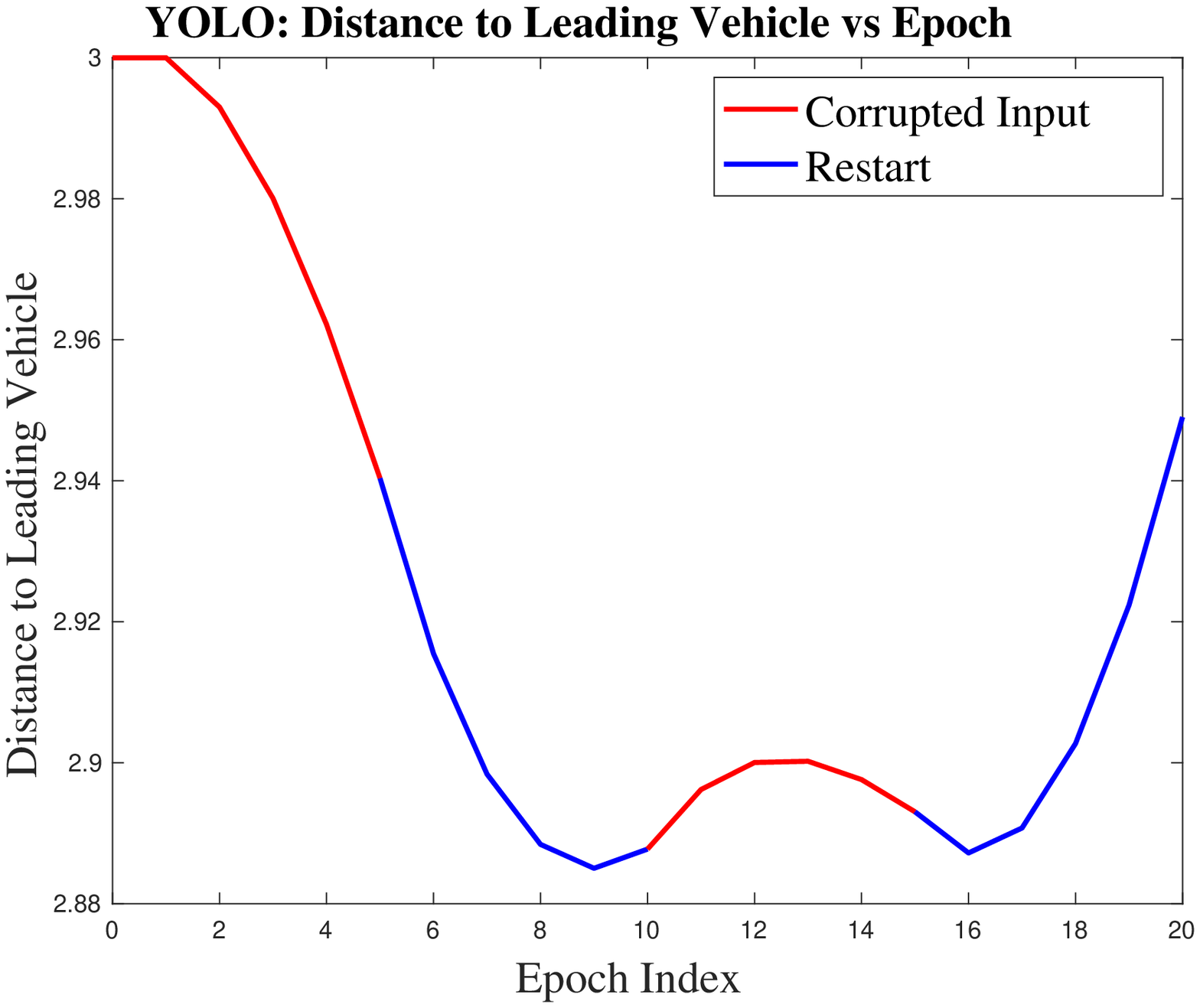}
                 \subcaption {}
                 \label{fig:YOLO}
                 \end{subfigure}\hfill
                 \begin{subfigure}{.33\columnwidth} \includegraphics[width=\columnwidth]{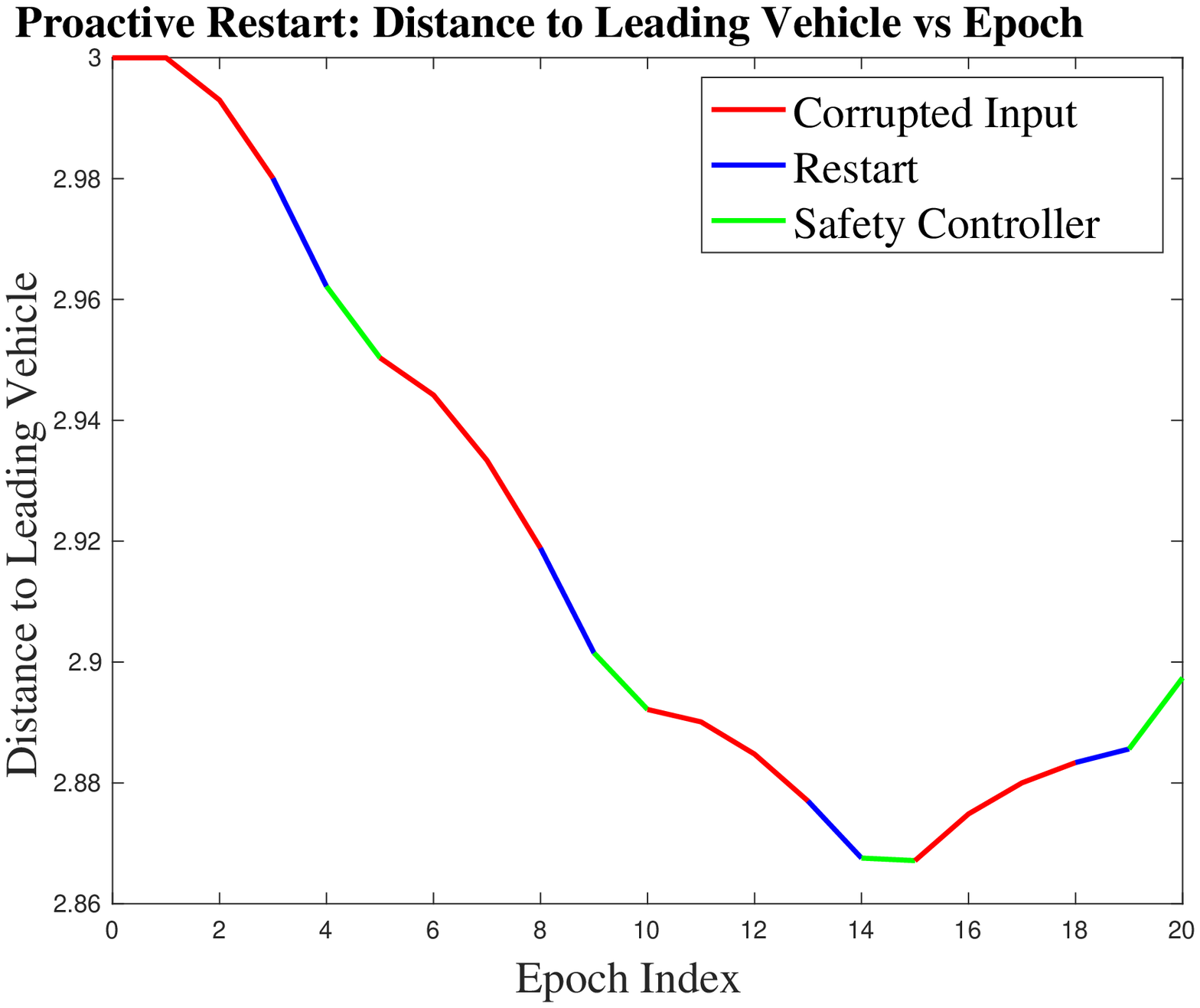}
                 \subcaption{}
                 \label{fig:proactive}
                 \end{subfigure}\hfill
\caption{The distance between the leading and trailing vehicles where trailing vehicle is subject fault or attack. The trailing vehicle implements (a) BFT++, (b) YOLO, and (c) proactive restart scheme. The fragment of trajectory in red color indicates that the controller is corrupted. The blue color fragments capture the parts of trajectories generated when control input $u=0$. The green color fragments depict the trajectories generated using the designed control policy or safety controller.}
\end{figure*}

\subsection{Evaluation of Simplex Architecture}
This subsection evaluates Simplex architecture \cite{sha2001using,bak2009system,mohan2013s3a}. We consider that the trailing vehicle is equipped with a high performance controller and a safety controller. In this case study, we assume that there exists no adversary. However, the high performance controller is vulnerable to random failures. The safety controller is assumed to be provably correct and thus is fault-free. Following the condition \eqref{eq:HOBC 3} and choosing $c_0=2.9, c_1=0.2$, we design the safety controller as 
\begin{align*}
    \min_u ~&u^Tu\\
    \text{s.t.}~ 
    &L_f^2h(x)+L_fL_gh(x)u + \alpha(L_fh(x)-0.2)\geq 0
\end{align*}
The trailing vehicle switches to the safety controller once the distance between two vehicles approaches to $D < 2.9$ or $h^1(x)$ approaches to $h^1(x) < 0.2$. 

\begin{figure*}[t!]
\centering

\begin{subfigure}{.33\columnwidth} \includegraphics[width=\columnwidth]{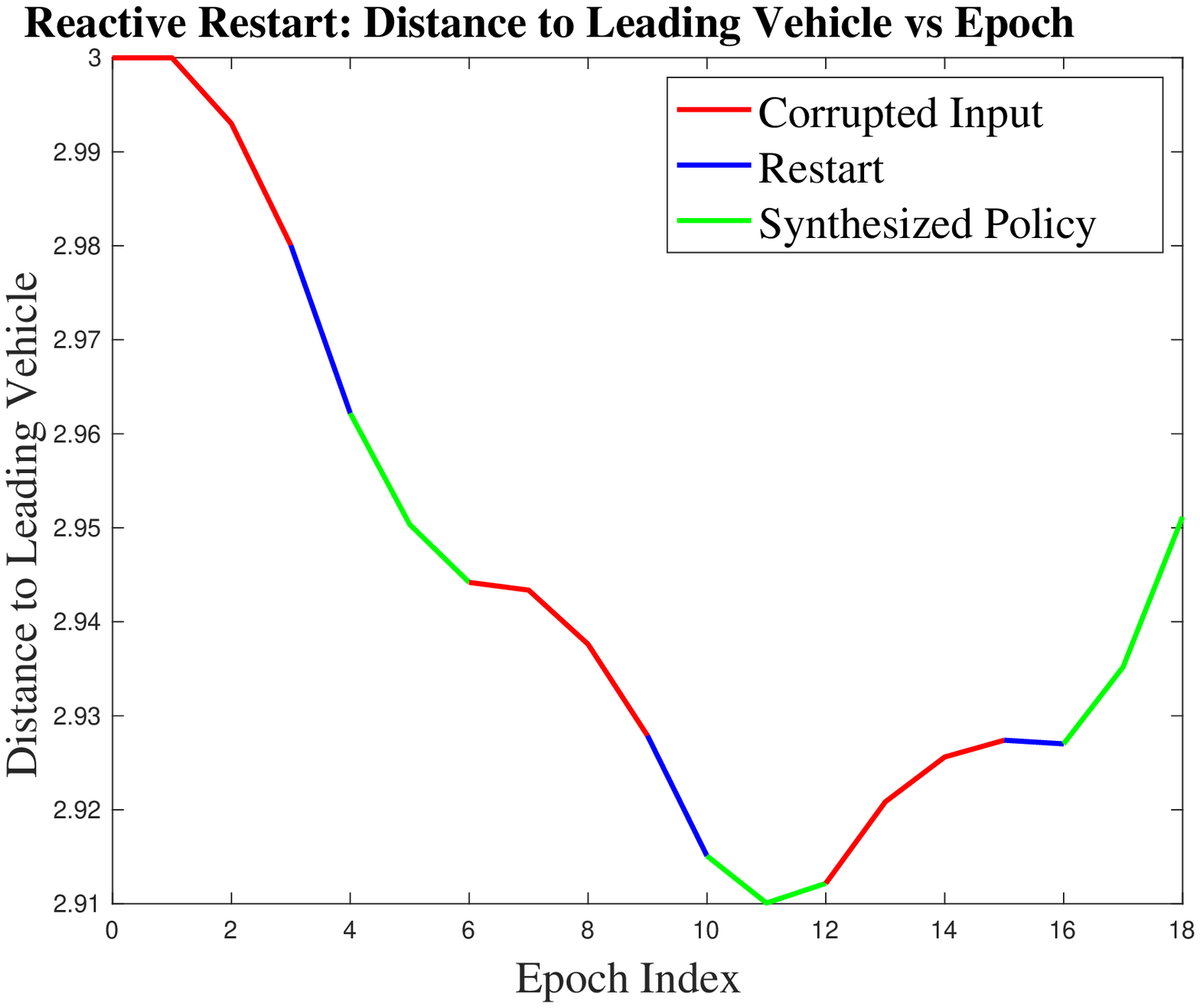}
                 \subcaption {}
                 \label{fig:reactive}
                 \end{subfigure}\hfill
                 \begin{subfigure}{.33\columnwidth} \includegraphics[width=\columnwidth]{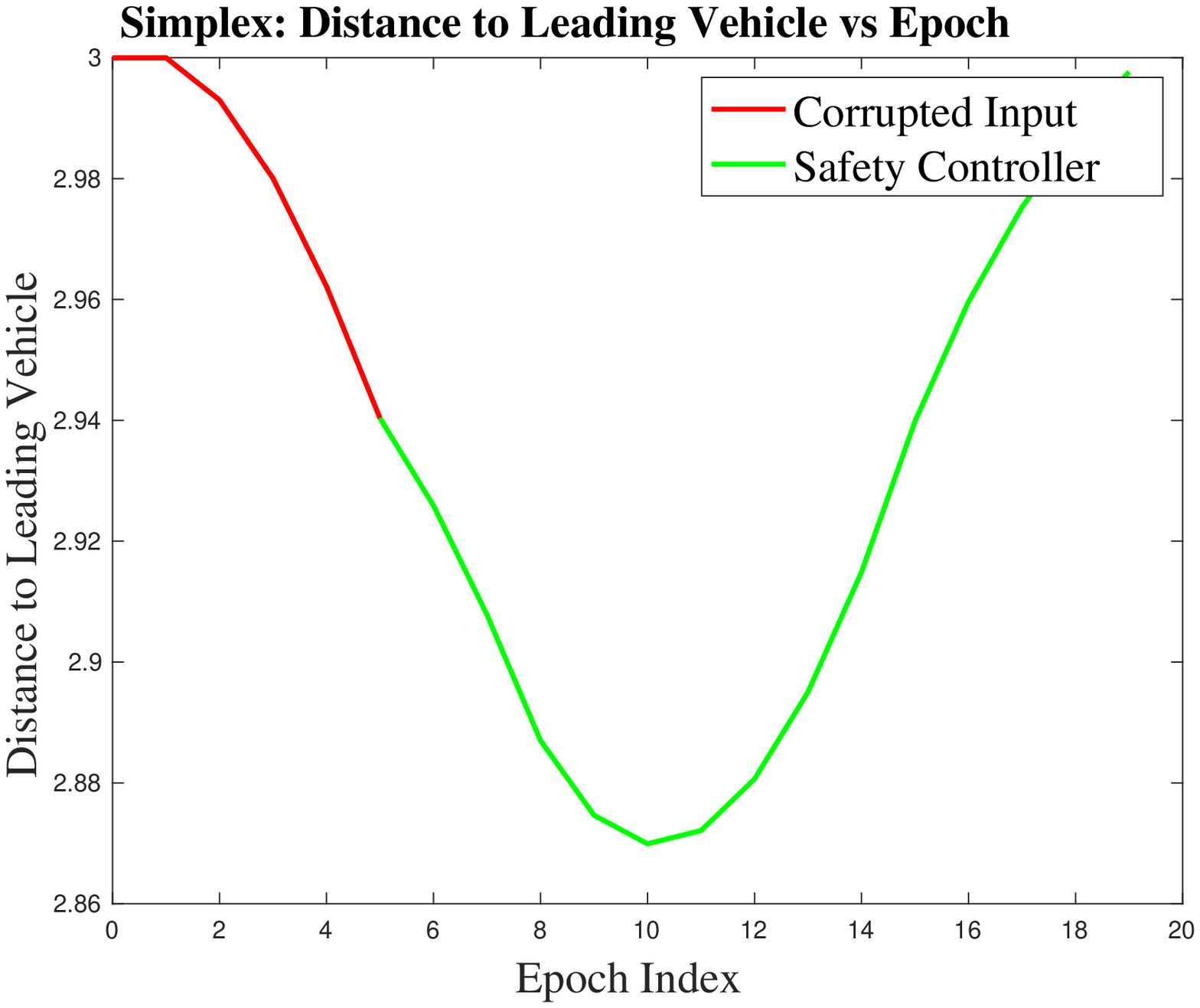}
                 \subcaption {}
                 \label{fig:simplex}
                 \end{subfigure}\hfill
                 \begin{subfigure}{.33\columnwidth} \includegraphics[width=\columnwidth]{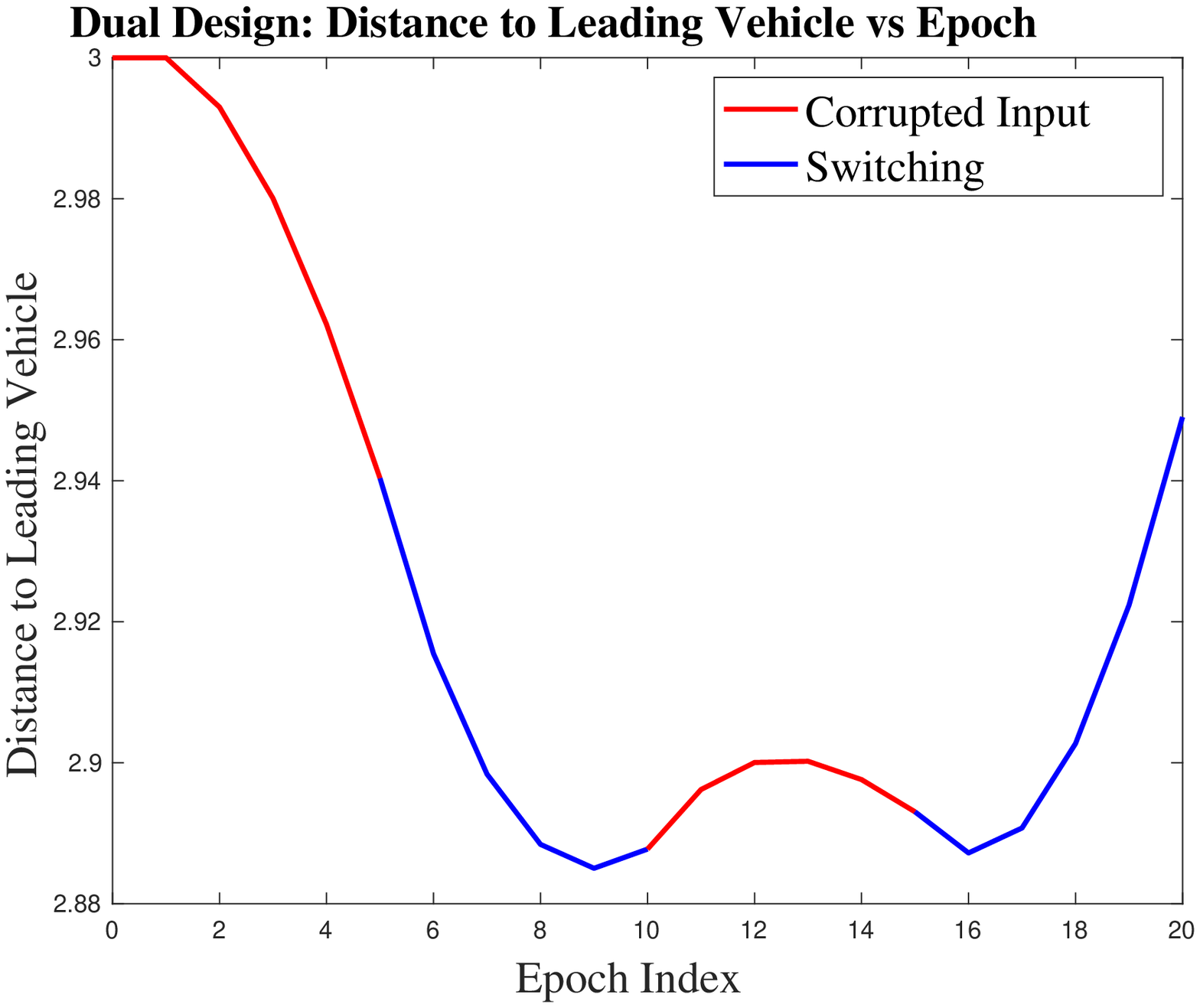}
                 \subcaption {}
                 \label{fig:dual}
                 \end{subfigure}\hfill
                 
\caption{The distance between the leading and trailing vehicles where trailing vehicle is subject fault or attack. The trailing vehicle implements (a) reactive restart, (b) Simplex and (c) dual redundant scheme. The fragment of trajectory in red color indicates that the controller is corrupted. The blue color fragments capture the parts of trajectories generated when control input $u=0$. The green color fragments depict the trajectories generated using the designed control policy or safety controller.}
\end{figure*}

\begin{table*}[!htpb] 
    \centering
\begin{tabular}{|l|l|l|l|l|l|}
\hline
CRAs & \begin{tabular}[c]{@{}l@{}} Cyber\\ Attack\end{tabular}  & \begin{tabular}[c]{@{}l@{}} Redundancy \end{tabular}& \begin{tabular}[c]{@{}l@{}} Controller\\ Availability \end{tabular}& \begin{tabular}[c]{@{}l@{}}Design\\ Freedom \end{tabular} & \begin{tabular}[c]{@{}l@{}}Maximum \\Impact \end{tabular} \\ \hline
\hline
Simplex/S3A \cite{sha2001using,bak2009system,mohan2013s3a}     &  $\times$ & \begin{tabular}[c]{@{}l@{}}Redundant\\ controller \end{tabular}    & $100\%(0\%)$  &   1   &0.13          \\ \hline
BFT++ \cite{mertoguno2019physics} & $\checkmark$ &  \begin{tabular}[c]{@{}l@{}}Redundant\\ controllers \end{tabular} &  $75\% (50\%)$     & 2 & 0.1088         \\ \hline
 YOLO \cite{arroyo2019yolo,arroyo2017fired}      & $\checkmark$            & NA & $50\%(0\%)$ &   2      & 0.115                                \\ \hline
 Dual Redundant \cite{gamarra2019dual}     & $\checkmark$   &\begin{tabular}[c]{@{}l@{}}Redundant\\ controller \end{tabular}  & $50\%(0\%)$ &   2      & 0.115                                \\ \hline
Proactive Restart \cite{abdi2018guaranteed,romagnoli2019design}  & $\checkmark$ &  \begin{tabular}[c]{@{}l@{}}Redundant\\ control program \end{tabular} & $80\%(20\%)$  &     3           & 0.1324 \\ \hline
Reactive Restart \cite{niu2022verifying} & $\checkmark$ & NA & $83.33\%(33.33\%)$ &  4 & $0.09$              \\ \hline

\end{tabular}
 \caption{Comparison among the existing resilient architectures when applying to adaptive cruise control using our proposed framework. We compare the architectures in terms of (i) whether the system can recover from cyber attack (second column), (ii) whether redundancy is required (third column), (iii) the controller availability at each attack cycle (fourth column), (iv) the design freedom (fifth column), and (v) the maximum impact introduced by the adversary (sixth column). In fourth column, we list both the controller availability under attack and nominal controller availability, where the latter is presented in parenthesis. The architectures are listed in an ascending order of their respective the design freedom.}
    \label{table:compare}
\end{table*}

We simulate the trajectory in Fig. \ref{fig:simplex}. We observe that in the worst-case, the random failure reduces the distance between two vehicles and thus can potentially lead to vehicle crash. Once the condition for switching to safety controller is satisfied at epoch 5, the safety controller is invoked. From the green color fragment in Fig. \ref{fig:simplex}, we observe that the safety controller slows down the trailing vehicle, and thus increases the distance with the leading vehicle to prevent safety violation.


\subsection{Evaluation of Dual Redundant Controllers Architecture}
In this subsection, we consider that the trailing vehicle is equipped with two identical controllers \cite{gamarra2019dual} which are periodically switched to actuate the vehicle. However the switching may not be instantaneous, and could incur some delay. During the delay, there is no control input to the system. To consider the worst-case scenario we assume that the controllers are compromised when they are online. For this case, the hybrid system visits $corrupted$ and $switching$ statuses during one attack cycle.

Using Algorithm \ref{algo}, we have that the system spends $5$ epochs each for status $switching$ and $corrupt$. We simulate the system trajectory as shown in Fig. \ref{fig:dual} which is similar to the one for YOLO i.e. Fig. \ref{fig:YOLO}. We observe that if the system is compromised, then the distance between two vehicles decreases. However, switching between controllers ensures that the safety constraint is satisfied.

\subsection{Comparison of Resilient Architectures}
In this subsection, we compare the resilient architectures evaluated using our proposed framework considering the following aspects: (i) whether the system can recover from cyber attack, (ii) whether redundancy is required, (iii) the worst-case and best-case controller availabilities at each attack cycle, (iv) the design freedom, and (v) the maximum impact introduced by the adversary. The controller availability is measured $T_{on}/T_{off}$, where $T_{on}$ and $T_{off}$ are the number of epochs that some controller is online and offline during one attack cycle, respectively. We further divide the controller availability into two cases. The first case gives controller availability under attack which accounts the total amount of epochs when the controller is online, regardless of whether the controller being compromised or not. The second case gives the nominal controller availability which only counts the epochs during which the controller follows the nominal policy into $T_{on}$.
We consider the design freedom of an architecture to be the number of design parameters in our proposed framework as given in Table \ref{table:parameters}. We quantify the maximum impact introduced by the adversary as the maximum amount of decrease in the value of $h(x)$ (i.e. distance $D$ between the leading and trailing vehicles) since the first epoch. The comparison result is summarized in Table \ref{table:compare}. We observe that Simplex and S3A are not suitable for the system subject to cyber attacks. Among the CRAs, BFT++ provides the maximum availability of the nominal controller under attack during considered attack cycle. We also observe that reactive restart provides the least impact from the adversary (0.09) since it has four design parameters ($\tau_1, \tau_2, \tau_3$ and $\mu(x)$) unlike other CRAs evaluated in this section.

\section{Conclusion}\label{sec:conclu}

We developed a timing-based framework for safety analysis and parameter design of CPS that applies to a set of seemingly unrelated resilient architectures. 
We presented a hybrid system model to capture CPS employing distinct CRAs. 
We used the transition model of the hybrid system to derive architecture-agnostic sufficient conditions for control policy and timing parameters that ensures safety of the CPS. Our derived conditions hold for a system whose barrier certificate for safety is of higher relative degree with respect to the physical dynamics. We formulated the conditions as a sum-of-squares program and based on that we proposed an algorithm for computing control policy and timing parameters of the implemented architecture. Our derived conditions and proposed algorithm are flexible enough to map them into various design problems relevant to resilient architectures. We also analyzed the convergence of our algorithm and proved that the algorithm converges to a feasible solution given that one exists. We presented a case study on the adaptive cruise control of vehicles to demonstrate applicability of our proposed framework for different CRAs. Possible future directions include extending our proposed framework for interconnected CPS, and incorporating other properties including stability and reachability.


\bibliographystyle{ACM-Reference-Format}
\bibliography{sample-base}

\appendix

\end{document}